\documentclass[12pt]{article}
\usepackage{amsmath}
\usepackage{graphicx}
\usepackage{enumerate}
\usepackage{natbib}
\setcitestyle{authoryear,aysep={}}
\usepackage{url} 
\usepackage{amssymb}
\usepackage{amsthm}
\newtheorem{theorem}{Theorem}

\usepackage[utf8]{inputenc}
\usepackage{amsmath, amsfonts, amssymb, amsthm}
\usepackage{enumitem}
\usepackage{verbatim}
\usepackage{graphicx}
\usepackage{caption}
\usepackage{array}
\usepackage{bbm}
\usepackage{fancyvrb}
\usepackage{xcolor}
\usepackage{natbib}
\usepackage{url}
\usepackage{comment}

\usepackage{csquotes}
\usepackage{subcaption}
\usepackage{makecell}
\usepackage{placeins}
\usepackage[colorlinks=true, allcolors=blue]{hyperref}

\usepackage{tikz}
\usetikzlibrary{shapes.geometric, arrows}

\definecolor{darkgray}{rgb}{0.3, 0.3, 0.3}
\usepackage{hyperref}
\hypersetup{
   colorlinks=true,
    linkcolor=black,
    filecolor=magenta,      
    urlcolor=darkgray,
    pdftitle={Christopher Adolph :: Curriculum Vitae},
    bookmarks=false
}

\newtheorem{lemma}[theorem]{Lemma}
\newtheorem{defn}[theorem]{Definition}

\DeclareMathOperator*{\argmin}{arg\,min}

\newcommand{\tht}{\theta}

\usepackage{caption}
\usepackage{subcaption}

\newcommand{\rar}{\rightarrow}

\newcommand{\V}{\mathcal{V}}

\newcommand{\SSS}{\mathcal{S}}
\newcommand{\LL}{\mathcal{L}}

\newcolumntype{P}[1]{>{\centering\arraybackslash}p{#1}}

\begin{document}

\def\spacingset#1{\renewcommand{\baselinestretch}%
{#1}\small\normalsize} \spacingset{1.5}


\title{\large \textbf{Can statistical models capture Mamdani's success? Social choice, ranked-choice voting, and model fit, with an application to the 2025 New York City Democratic Primary}}

\author{
\small Michael Pearce$^\text{a}$, Erin R. Lipman$^\text{b}$, Christopher Adolph$^\text{cd}$, and Elena A. Erosheva$^\text{bde}$\\
\small $^\text{a}$Department of Mathematics and Statistics, Reed College\\
\small $^\text{b}$Department of Statistics, University of Washington\\
\small $^\text{c}$Department of Political Science, University of Washington\\
\small $^\text{d}$Center for Statistics and the Social Sciences, University of Washington\\
\small $^\text{e}$School of Social Work, University of Washington
}

\maketitle

\bigskip
\begin{abstract}
Ranked-choice voting is increasingly prevalent in elections. An extensive literature in \textit{social choice theory} -- the study of collective decision-making with the goal of making compromises among disparate opinions -- considers theoretical and empirical properties of various election procedures.
Recently, a sub-literature on \textit{rationalizability}, led by social choice theorists and computer scientists, makes explicit connections between social choice rules and maximum likelihood estimation. This paper further illuminates connections between social choice and statistical summaries of data. We begin by studying rationalizability from a statistical perspective, expanding existing results to realistic voting contexts and deriving statistical details necessary for model estimation and assessment. We then apply our work to ranked-choice votes from the 2025 New York City Democratic mayoral primary election. We demonstrate how rationalizing models impose unrealistic distributional assumptions and thus exhibit poor fit to voting data. Additionally, we show that non-rationalizing models meaningfully elucidate heterogeneous voter preferences. 
Our work demonstrates the inability of rationalizing models to capture key features of distributions of preferences in political elections, depriving analysts of fundamental uses, such as inference, typically associated with statistical modeling. Conversely, we show that statistical modeling can capture nuanced voter preferences by paying careful attention to plausible data-generating mechanisms. 
\end{abstract}

\noindent%
{\it Keywords:}  election forecasting, mixture models, MLE-rationalizability, preference modeling, single transferable vote, social choice theory

\spacingset{1.25}

\section{Introduction}\label{sec:introduction}

On the surface, social choice theory and statistical preference modeling are distinct in methods and aims. Social choice theory studies collective decision-making, providing formal rules to make compromises among disparate individual opinions. Statistical preference modeling is alternatively concerned with positing 
models for observed preferences, estimating their parameters and associated uncertainty, and thus inferring or predicting specific quantities of interest informed by the model. In the context of elections, social choice rules are chiefly concerned with yielding winner(s) whose selection can be justified with respect to desirable axioms. Statistical models instead aim to capture features of the distribution of individual opinions to allow for inference such as quantifying
 relative support for the candidates or identifying heterogeneous
voter blocs. Finally, social choice rules are deterministic, while statistical models aim to separate signal from noise under some posited distributional assumptions.

Despite these distinctions, social choice theory has relied on statistical thinking since its  origins in the 18th century. Important to understanding these connections is the distinction between settings where a ``ground truth” exists (\textit{epistemic} social choice) and those where it does not (\textit{non-epistemic}). For example, French thinker Nicholas de Condorcet posed his famous ``jury theorem" in the former setting, assuming individuals differed not in subjective preferences but in their assessment of a proposition's truth value~\citep{condorcet1785essay}. American economist Peyton Young similarly assumed a ``correct" choice exists when arguing for what is now called the Kemeny-Young social choice rule~\citep{young1995optimal}. In contrast, voters in elections often vary systemically based on their ideology or political party, creating the potential for paradoxes and perverse outcomes in the process of preference aggregation (e.g., Arrow's~(\citeyear{arrow1950difficulty}) impossibility theorem and McKelvey's~(\citeyear{mckelvey1976chaos}) chaos theorem).
Distinguishing between epistemic and non-epistemic settings is ultimately an exercise in ascertaining a plausible distribution of individual opinions. Assuming the existence of ``ground truth" admits the possibility that the distribution of individual opinions is plausibly unimodal, centered on that truth. That is, individual expressions may exhibit random noise around a true mode. Condorcet's jury theorem states that in such settings, one is likely to recover the ``truth" as the number of voters increases. On the other hand, when individual preferences reflect systematically heterogeneous ``tastes,'' plausible distributions of individual opinions are multi-modal. In this setting, opinions cannot be easily reconciled; a social choice cannot be ``correct" or ``incorrect'' because a single ground truth simply does not exist. 

Moreover, explicit connections between certain social choice rules and statistical distributions have been made in the past. Perhaps most famously, the ranking outputted by the Kemeny-Young rule is equivalent to the maximum likelihood estimator (MLE) of the modal ranking parameter in a Mallows distribution \citep{ali2012experiments}. 
In such cases, the social choice rule is said to be \textit{rationalized} by the statistical model~\citep{conitzer2012common}. One may be tempted to assume that a statistical model that rationalizes a social choice rule via its MLE may be used to conduct valid inference on votes, and thus improve our understanding of voter opinions or the social choice rule itself. 

Instead, we argue that for a statistical model rationalizing a social choice rule to be useful for inference, it first must fit the data well, just like any other model in any other application. Essential statistical considerations such as goodness-of-fit have too often been ignored when studying connections between social choice rules and statistical models. 
Obtaining a good absolute model fit to high-dimensional ranking data may be a tall order. Nevertheless, assessing distributional assumptions imposed by a rationalizing model and its fit to observed voting data with respect to meaningful discrepancy measures is necessary for valid statistical inference. 

In this paper, we study connections between social choice rules and rationalizing models from a statistical perspective. We demonstrate how rationalizing models often impose unrealistic distributional assumptions and exhibit poor fit. We also show that other, non-rationalizing statistical models may meaningfully elucidate voter preferences outside the social choice context of determining a winner.
The paper proceeds as follows: We first outline the history of social choice in epistemic and non-epistemic contexts and note existing connections to rationalizing statistical models (Section 2). We then review common social choice rules and statistical models for single-preference and ranked-choice voting contexts (Section 3). Next, we extend existing theoretical results on rationalizability to realistic election settings. This includes deriving a statistical distribution that rationalizes an important social choice rule for studying ranked choice elections, developing its properties, and proposing an algorithm for parameter estimation (Section 4). We use this machinery to analyze real ranked choice votes from the 2025 New York City Democratic mayoral primary election won by Zohran Mamdani. We demonstrate that constraints imposed by the ranked choice rationalizing model are stringent, unrealistic, and result in poor fit, both overall and specifically for first-place and second-place votes. We also study these data from a purely statistical perspective, demonstrating how a mixture model results in a well-fitting and interpretable representation of observed votes (Section 5). We conclude with a discussion of the importance of statistical modeling for studying voter preferences (Section 6).

\section{Social Choice Theory and Statistics}

In this section, we outline the history of social choice theory as it relates to the usual non-epistemic settings (in which there is no objectively ``correct" choice) and narrower epistemic contexts (in which we assume there is). We then discuss social choice rules that may be ``rationalized" by a statistical model and conclude by discussing practical problems of such rules.

\subsection{Epistemic Social Choice in Context}

Social choice theory is a broad field with a long history dating to 18th century French thinker Nicolas de Condorcet, best known for two eponymous contributions. \emph{Condorcet's paradox} considers situations where voters have three or more options and varied preferences thereof. He showed that even if voters have rational (transitive) preferences over the options, a sequence of pairwise majority votes can produce a cycle that neither identifies a single option preferred by the majority to all other alternatives nor produces a transitive social ordering of the options. 
\emph{Condorcet's jury theorem} holds when $n$ voters are posed a single, purely empirical question with two possible answers, correct and incorrect. Condorcet's jury theorem states that if each voter has an equal and independent chance $p$ of answering correctly, then the probability of the majority of voters reaching the correct answer approaches 1 as $n \to \infty$ for $p>\frac12$. The assumption that $p>\frac12$
is known in the literature as \textit{competence}. 

Condorcet's work illustrates that under strict conditions---sincere and competent voters united by homogeneous preferences---majority rule is suitable for aggregating votes into an optimal social choice. But if preferences are heterogeneous, then under a broad range of conditions, aggregation of individual preferences is beset with paradoxes and perverse (i.e., logically incoherent) outcomes. (The main exception is when every individual's preference ordering is single-peaked in $\mathbb{R}^1$, in which case the median voter theorem holds \citep{black1948}.)  \textit{Epistemic} social choice may be defined as theory which chooses to operate within the jury theorem's narrow assumptions \citep{list2022}. Applications of epistemic social choice theory share the assumption of a single ``ground truth" which individuals seek to determine. Thus, the social choice problem is aggregating opinions about the truth, rather than differences in personal preferences.\footnote{Even epistemic social choice is subject to impossibility results analogous to Condorcet's paradox and Arrow's impossibility theorem when there are multiple propositional claims to be decided. In such cases, majority rule can produce a set of logically incoherent outcomes \citep{listpettit2002}.} 

Most social decisions fall outside the scope of epistemic social choice. We note two ways this might occur. First, voters might be systematically ``incompetent'' on a particular binary question so that $p>\frac12$ does not hold.
Second, and more important for our paper, there may be no ``ground truth" for the aggregation rule to uncover. Human voters usually differ systematically in their preferences, rendering the notion of an objectively correct decision moot. This is hardly a controversial claim: the existence of varied preferences across actors has long been a core axiom in theories of economic and political choice \citep{arrow1951}. Experimentally validated cross-national surveys find substantial differences in individual preferences within and across countries \citep{falk2018} predicting, for example, varied occupational choices \citep{bonin2007} and health behavior \citep{chabris2008}. Finally, a long tradition in comparative political science characterizes voters' preferences over political parties as divided by enduring cleavages -- social class and religion \citep{lipsetrokkan1967}, ethnic identity \citep{horrowitz1985}, and vulnerability to globalization \citep{kriesi2008}, among others -- across which the concept of a ``ground truth" preference seems ill-defined.

The social choice literature in political science and economics has primarily focused on non-epistemic settings and has systematically found all aggregation mechanisms to exhibit significant limitations. The most influential social choice theorist since Condorcet is undoubtedly the American economist Kenneth Arrow, whose impossibility theorem~\citep{arrow1950difficulty} generalized Condorcet's Paradox. Arrow showed that no aggregation rule other than single-person dictatorship can be guaranteed to obtain a complete and transitive social ordering of alternatives for any possible set of transitive voter preferences while still satisfying both the weak Pareto property (if all voters prefer $x$ to $y$, then the social ordering ranks $x$ above $y$) and the independence of irrelevant alternatives (the relative social ordering of $x$ and $y$ depends only on voters' rankings of these options). Relatedly, the chaos theorem~\citep{mckelvey1976chaos} shows that when voters' ideal outcomes map to a latent space on $\mathbb{R}^k, k>1$, then, except under irrelevant measure zero cases \citep{plott1967}, pairwise majority rule does not ensure a Condorcet winner but instead can lead to any outcome whatsoever, including points outside the convex hull of voters' preferences. As a result of these and other theoretical results, for the past half century, social choice theorists have mostly abandoned the search for an ``ideal" aggregation mechanism for social choice in democratic settings. Instead, some have endeavored to understand how political institutions
enable strategic actors to achieve desired outcomes through voting procedures \citep{shepsleweingast1981sie}, or seek ``possibilist" compromises \citep{sen1998} by investigating the theoretical tradeoffs created by aggregation rules that relax various Arrowian axioms, as well as the empirical significance of these tradeoffs, e.g., \cite{list2013deliberation,miller1992deliberative,knight1994aggregation,dryzek2003social,List2001epistemic,tsetlin2003impartial,regenwetter2006behavioral}.

\subsection{Social Choice Rules as MLEs}

Recent work in social choice theory often seeks to characterize unique solutions to social choice problems based on specific sets of axioms.  Evaluating such work entails not only proving the existence or uniqueness of social choice solutions under specific axioms, but also defending the choice of those axioms under some scope. An important example of this axiomatic approach to social choice emphasizes \textit{rationalizability}. A social choice rule is rationalizable if it admits a statistical model whose maximum likelihood estimator (MLE) of a group preference parameter\footnote{We only consider rationalizability with respect to an MLE. Other forms of rationalizability exist (e.g., maximum \textit{a posteriori} rationalizability) which are less common and beyond the scope of this paper.} is identical to the social choice whenever the rule and the model are supplied the same votes \citep{pivato2013voting, conitzer2012common, xia2014statistical}. One might justify a social choice rule by proving that it is rationalized by some statistical distribution of individual opinions.

The literature on rationalizability largely focuses on epistemic social choice problems, such as aggregating the outcomes of autonomous software agents solving a problem (e.g., internet search) which arguably has a ground truth \citep{brandtetal2016}. However, some papers arguing for the selection of social choice rules on the grounds of rationalizability conjecture applications to explicitly political decisions, such as how central banks set interest rates \citep{young1995optimal}, by claiming they involve disagreement only on factual questions, rather than preferences over outcomes. However, there are strong reasons to doubt such questions are ever merely epistemic, rather than matters of widely differing preferences over policy objectives \citep{adolph2013banks,jacobsking2021fedpower}.

Economist Peyton Young makes the most expansive case for rationalizability as a justification for social choice rules outside epistemic problems. In particular, he claims that simple majority rule is ``the best way to estimate the optimal choice'' on two alternatives and that this result ``applies to any choice problem in which people agree about the objective, but disagree about the best means to achieve that objective''~\citep[p. 53]{young1995optimal}.
Furthermore, he argues that regardless of whether a ground truth exists, social choice rules which produce group decisions that are most likely to be ``correct" under that assumption should be preferred in practice.\footnote{Young argues that for Condorcet, like Jean-Jacques Rousseau, voting is an exercise in finding the ``truly best" option. Though Young acknowledges that what makes a choice ``best" is undefined, he argues that it is nonetheless a realistic criterion in cases like a jury's determination of guilt or innocence, or an expert panel's prediction of an ``observable event" \citep[p. 1232]{young1988}.}  
For cases in which more than two alternatives are ranked, Young assumes that between each pair of alternatives, each individual makes the ``correct'' choice more than half the time (where the existence of a ``correct'' choice assumes the existence of a factual answer or a common preference); the individual's choice between any pair of alternatives is independent of their choices between the other pairs; and that choices of different individuals are independent.
These distributional assumptions imply that individual rankings follow a Mallows distribution \citep{mallows1957non,ali2012experiments}. Accordingly, \citet{young1995optimal} concludes that the Kemeny-Young ranking---which is the MLE of a Mallows distribution---is the most appropriate solution to the social choice problem on statistical grounds.

Pointing out other desirable properties of the Kemeny-Young ranking---that it is anonymous, neutral, Pareto efficient, and satisfies reinforcement and local independence of irrelevant alternatives---Young (\citeyear{young1995optimal}, p. 62-63) concludes that the maximum likelihood ranking ``is arguably the best method if we think of voting as a collective quest for truth$\ldots$ on the other hand, in situations where voting appears to be a way of compromising between conflicting values, maximum likelihood rule still makes sense because it represents the median opinion."
 
Although, as Young (\citeyear{young1995optimal}, p. 52) himself acknowledges, these conclusions differ significantly from most discussions of social choice under conflicting preferences, his proposal raises the question of whether MLE-based social choice could work well in practice.

\subsection{Practical problems with MLE-based social choice}

Fast and accurate computation is an important consideration in social choice and statistical estimation. Social choice rules commonly used in practice, such as plurality rule and instant runoff voting, can be obtained by following simple algorithms. However, for the Kemeny-Young ranking and corresponding MLE of a Mallows distribution, no such algorithm exists. In fact, there may not even be a unique solution. Moreover, when ranking data are generated from a distribution that does not have a mode (e.g., a discrete uniform across rankings) or is a mixture of groups with different opinions, the solution is highly unstable such that adding or removing a single observation can have a large influence \citep{ali2012experiments}.

Importantly, many reasonable social choice rules are not rationalizable. For example, under certain conditions the ranked-choice (single winner), single transferable vote (multi-winner), Buckland, Copeland, maximin, and ranked pair rules do not admit a corresponding statistical model \citep{conitzer2012common}.

Even when rationalizing models exist, the literature on statistical modeling recognizes that a single unimodal distribution may not be able to capture structural aspects of variability among voter expressions. For example, early work on computing the Kemeny ranking noted that, in practice, mixture distributions need to be considered for real ranking data~\citep{meila2007}. Analyzing actual ballots of single transferable votes from Irish elections, mixtures of Benter distributions were used for capturing Irish electorate's voting blocs \citep{Benter1994,gormley2008exploring}. However, while useful for describing heterogeneity, mixture models do not lend themselves to a natural mapping onto a social choice rule yielding a single rank-ordering of candidates.

\section{Social Choice and Statistical Perspectives on Preference Aggregation}\label{sec:background}

In this section, we review social choice rules and statistical models that may be used to analyze single preference and ranked choice votes. Then, we draw specific connections among them via rationalizability.

\subsection{Social Choice Rules}\label{sec:socialchoice}

Formally, social choice rules are deterministic functions that input a set of individual votes and output the corresponding group preference. 
Individual votes may take various forms, such as a \textit{single preference} for one candidate, a \textit{partial ranking} of a candidates, or a \textit{complete ranking} of all candidates.
The group preference may also take various forms. Most commonly a single winner is determined, although some elections result in multiple winners. A complete ranking of candidates from best to worst is also a valid group preference, even if it is not often an actual election outcome.

Let $X$ be a finite set of candidates and $\mathcal{L}(X)$ be the set of partial orderings on $X$. Let $\V\subseteq \LL(X)$ be the allowable set of votes. 
A set of $N$ votes, $V=\{v_1,\dots,v_N\}\in \V^N$ is called a \textit{profile}. Let $\SSS\subseteq \LL(X)$ be the set of allowable outcomes. A social choice function is a map $f: \V^N\rar\SSS$ from a profile of $N$ votes to an outcome.

We now present 5 social choice rules commonly used in elections.
The first two rules elect a single winner; the remaining three determine outcomes beyond a single winner.

\vspace{1em}
\noindent\textbf{Example 1 (Plurality)} \textit{Suppose voters cast single preference votes among $J$ candidates. The winner is the candidate who received more votes than any other.}
\vspace{1em}

\noindent\textbf{Example 2 (Instant Runoff Voting [IRV])} \textit{Suppose voters cast partial or complete rankings among $J$ candidates. The winner is chosen in multiple rounds as follows (illustrated with example votes in Figure~\ref{fig:IRV_ex}):}

\vspace{.5em}
\textit{1. If any candidate receives a majority of first place votes, he/she wins. 
    In the example, candidates $a$, $b$, and $c$ receive $4$, $6$, and $3$ first-choice respectively. Thus no candidate receives the $7$ out of $13$ votes required for a majority (Figure~\ref{fig:IRVa}).}
    
\textit{2. If no candidate receives a majority, the candidate with the fewest first-choice votes is eliminated. This candidate is removed from each voter's ranking; all candidates ranked behind the eliminated candidate move up one place in the ranking. }

\textit{3. Steps 1 and 2 are repeated until some candidate holds the majority of the first-choice votes. That candidate is declared the winner.
    In the example, $a$ holds the majority of first-choice votes (7 of 13) after $c$ is eliminated (Figure~\ref{fig:IRVb}), and thus wins.}

\vspace{1em}
\noindent\textbf{Example 3 (Single Transferable Voting for ranking all candidates [STV-R])} \textit{Voters express a ranking of some or all candidates. A ranking of all candidates is determined over multiple rounds as follows (illustrated again by votes in Figure~\ref{fig:IRV_ex}):}

\vspace{.5em}
\textit{1. The candidate with the fewest first place votes is eliminated. This candidate is removed from each voter's ballot; all candidates ranked behind the eliminated candidate move up one place in the ranking. In the example, candidate $c$ is eliminated first.}
    
\textit{2. Step 1 is repeated until no candidates remain. The reverse elimination order determines the ranking of candidates. In the example, candidate $b$ is eliminated second and candidate $a$ is eliminated last. Hence, the social ranking is $a\succ b\succ c$.}

\begin{figure}[h!]
    \centering
    \begin{subfigure}[t]{0.25\textwidth}
    \centering
    \begin{tabular}{c|c}
         \# votes & $v$ \\
         \hline
         3 & $c\succ a\succ b$\\
         4 & $a\succ b\succ c$\\
         6 & $b\succ a\succ c$\\
    \end{tabular}
    \caption{Full profile}
    \label{fig:IRVa}
    \end{subfigure}
    ~$\overset{\text{remove c}}\Longrightarrow$~
    \begin{subfigure}[t]{0.25\textwidth}
    \centering
    \begin{tabular}{c|c}
         \# votes & $v$ \\
         \hline
         3 & $a\succ b$\\
         4 & $a\succ b$\\
         6 & $b\succ a$\\
    \end{tabular}
    \caption{Profile after $c$ is eliminated}
    \label{fig:IRVb}
    \end{subfigure}
    ~$\overset{\text{remove b}}\Longrightarrow$~
    \begin{subfigure}[t]{0.25\textwidth}
    \centering
    \begin{tabular}{c|c}
         \# votes & $v$ \\
         \hline
         3 & $a$\\
         4 & $a$\\
         6 & $a$\\
    \end{tabular}
    \caption{Profile after $b$ and $c$ are eliminated}
    \label{fig:IRVc}
    \end{subfigure}
    \caption{Illustrative example for IRV and STV-R}
    \label{fig:IRV_ex}
\end{figure}

\noindent\textbf{Example 4 (Single Transferable Voting for electing $K$ candidates [STV-K])} \textit{Voters express a partial or complete ranking of $J$ candidates. $K<J$ candidates are elected over multiple rounds as follows:}

\vspace{.5em}
\textit{1. A candidate is declared a winner if they receive more than $\frac{N}{K+1}$ first place votes, where $N$ is the number of voters. If one or more candidates exceed this threshold, excess votes are transferred to those voters' next-choice candidate(s).}
    
\textit{2. If fewer than $K$ candidates have been elected, the candidate with the fewest first place votes is eliminated. This candidate is removed from each voter's ballot; all candidates ranked behind the eliminated candidate move up one place in the ranking.}

\textit{3. Steps 1 and 2 repeat until $K$ candidates are elected.}

\vspace{1em}
\noindent\textbf{Example 5 (Kemeny-Young)} \textit{Voters express a partial or complete ranking of $J$ candidates. The social choice rule outputs a ranking of all candidates according to
\[\argmin_{s\in\mathcal{S}} \sum_{i=1}^N \tau(v_i,s),\]
where $\tau(v,s)$ is the Kendall-Tau distance between two rankings $v$ and $s$ over $X$, i.e., the number of pairs $x,y\in X$ for which $v$ and $s$ rank in different relative orderings.}

\subsection{Statistical Preference Models}\label{sec:statmodels}

Unimodal statistical models for preference data posit there exist parameters describing the population preference and assume each voter's preference is a noisy version of the population preference. Given a set of candidates $X$, a vote space $\V\subseteq \LL(X)$, and a domain of population preferences $\SSS\subseteq \LL(X)$, let $s\in\SSS$ represent the true population preference and $V=(v_1,\dots,v_N)$ the sampled profile of votes. Then voters' preferences are described by a statistical model $P(V\mid s)=P(v_1,\ldots,v_N\mid s)$. In what follows, we assume that all votes are independent and identically distributed given the true (but unknown) population preference $s$ and model parameters $\theta$, that is,
\[p(V\mid s, \tht)=\prod_{i=1}^N p(v_i\mid s,\tht).\]

Four commonly-used statistical models for preference modeling are the Multinomial, Mallows \citep{mallows1957non}, Plackett-Luce \citep{plackett1975analysis}, and Benter \citep{Benter1994}. We define each in turn.

\vspace{1em}
\noindent\textbf{Example 6 (Multinomial)} \textit{Let candidates $x_j \in X$, $j=1,\dots,J$, be assigned a worth parameter $\theta_j>0$ such that $\sum_j\theta_j=1$ and define $s\equiv\text{order}(\theta)$. The Multinomial model assumes that votes in the form of a top-1 ranking are generated IID according to the likelihood
\begin{equation}
    p(V|\theta) = \prod_{i=1}^N\prod_{j=1}^J \theta_j^{I\{v_i(1)=j\}} \label{eq:Multinomial}
\end{equation}
\noindent where $v_i(1)$ is the candidate in first place according to voter $i$. }
\vspace{1em}

\vspace{1em}
\noindent\textbf{Example 7 (Mallows)} \textit{The Mallows model assumes that votes are generated IID according to the likelihood
\begin{equation}
    p(V\mid s, \theta)\propto
\prod_{i=1}^N \exp\left(-\theta\cdot \tau(v_i,s)\right)
=\exp\left(-\theta\sum_{i=1}^N \tau(v_i,s)\right) \label{eq:Mallows}
\end{equation}
\noindent where $\tau$ denotes the Kendall-Tau distance. A dispersion parameter $\theta>0$ controls the concentration of rankings around population preference $s$.}
\vspace{1em}

\vspace{1em}
\noindent\textbf{Example 8 (Plackett-Luce)} \textit{Let candidates $x_j \in X$, $j=1,\dots,J$, be assigned a worth parameter $\theta_j>0$ such that $\sum_j\theta_j=1$ and define $s\equiv \text{order}(\theta)$. The Plackett-Luce model assumes that votes are generated IID according to the likelihood
\begin{equation}
    p(V|\theta) = \prod_{i=1}^N\prod_{r=1}^J \frac{\theta_{v_i(r)}}{\sum_{s=r}^J \theta_{v_i(s)}}\label{eq:PL}
\end{equation}
\noindent where $v_i(r)$ is the candidate in $r^\text{th}$ place according to voter $i$.}
\vspace{1em}

\vspace{1em}
\noindent\textbf{Example 9 (Benter)} \textit{Let candidates $x_j \in X$, $j=1,\dots,J$, be assigned a worth parameter $\theta_j>0$ such that $\sum_j\theta_j=1$  and define $s\equiv \text{order}(\theta)$. Let $\alpha_r\in[0,1]$, $r\in[1,J]$ be a dampening parameter for rank-level $r\in[1,J]$. The Benter model assumes that votes are generated IID according to the likelihood
\begin{equation}
    p(V|\theta,\alpha) = \prod_{i=1}^N\prod_{r=1}^J \frac{\theta_{v_i(r)}^{\alpha_r}}{\sum_{s=r}^J \theta_{v_i(s)}^{\alpha_r}}\label{eq:Benter}
\end{equation}
\noindent where $v_i(r)$ is the candidate in $r^\text{th}$ place according to voter $i$. Usually, $\alpha_1=1$ for identifiability. }
\vspace{1em}

Statistical models are importantly judged by their ability to fit data. The Multinomial model is likely to adequately fit single-preference votes because of the simplicity of the vote itself. The remaining models pose a more challenging problem. The Mallows model assumes that all rankings equal in distance from $s$ have the same probability. 
This assumption may be highly unrealistic in ranked-choice elections where voters typically have clear preferences regarding the most prominent candidates (e.g., those typically ranked first or second) but are more likely to randomly swap the ordering of less serious and less known candidates (e.g., those commonly ranked near the bottom of voters' lists). The Plackett-Luce model relaxes the symmetry assumption but imposes, among other assumptions, ``independence of irrelevant alternatives" (IIA). The IIA assumption may be unrealistic in the case of voter heterogeneity. The Benter model further relaxes assumptions of Plackett-Luce by adding a dampening parameter that permits additional randomness at down-ballot rank levels.

The Mallows, Plackett-Luce, and Benter models contain relatively few parameters to capture a high-dimensional and likely multi-modal collection of votes. Thus, it may be necessary to fit these models within the context of a latent class mixture model framework to achieve acceptable goodness-of-fit. In a latent class mixture model, we assume votes are generated IID according to the likelihood,
\begin{equation}
    p(V|s,\theta,\pi) = \prod_{i=1}^N\prod_{k=1}^K \pi_k \times p(v_i|s_k,\theta_k), \label{eq:mixtures}
\end{equation}
where $K$ is the number of latent classes, $\pi_k$ is the probability that each voter belongs to class $k\in\{1,\dots,K\}$ such that $\sum_{k=1}^K\pi_k=1$, and $(s_k,\theta_k)$ are class-specific parameters for each class $k$. 

In practice, the number of latent classes, $K$, must be selected. Two common selection criteria are the Bayesian Information Criterion (BIC) and Integrated Complete-data Likelihood (ICL). We define BIC and ICL according to,
\begin{align}
\begin{split}
    \text{BIC} &= -2\hat{L} + \nu\log(n)\\
    \text{ICL} &= -2\hat{L} + \nu\log(n) -2\sum_{i=1}^n\sum_{k=1}^K\hat z_{ik}\log\hat\pi_{ik}
\end{split}\label{eq:bicicl}
\end{align}
where $\hat{L}$ is the maximized observed model loglikelihood, $\nu$ is the number of model parameters, $n$ is the number of votes, $\hat{z}_{ik}$ is a binary indicator variable for voter $i$ being assigned to class $k$, and $\hat\pi_{ik}$ is the estimated probability that voter $i$ belongs to class $k$. BIC is a common model selection criterion but is known to overestimate the number of latent classes when applied to mixture models. ICL aims to further encourage parsimony by adding an entropy term that penalizes models with overlapping latent classes. Models with low BIC and ICL are generally preferred \citep{celeux2019model}.

\subsection{Connecting Social Choice and Statistical Modeling via Rationalizability}\label{sec:connections}

Any statistical estimator $\hat{s}$ for population preference $s$ can be seen as a social choice rule \citep{pivato2013voting, conitzer2009preference}. However, as noted earlier many social choice rules are rationalized by a statistical model and corresponding maximum likelihood estimator.
The MLE of the Multinomial model, 
$$\hat{s} = \text{order}(\hat\theta), \text{ where } \hat\theta_j = N^{-1}\sum_{i=1}^N I\{v_i(1)=j\},$$
is the ranking of candidates under an election based on plurality rule.
The MLE of a Mallows model, $$\hat{s}=\underset{s}{\arg\min} \sum_{i=1}^N \tau(v_i,s),$$ regardless of $\theta$ is the ranking of candidates under an election based on Kemeny-Young.
Some social choice rules are rationalized by statistical distributions lacking names. For example, \cite{conitzer2012common} proposed probability distributions that rationalize the Borda and STV-R rules. 

Conversely, \cite{conitzer2012common} provide the following lemma for demonstrating that a rule is not rationalizable:
\begin{lemma}
\label{lem:reinforcment}
    \textit{For a voting rule $f$, if there exist two voter profiles $V_1$ and $V_2$ such that $f$ chooses the same outcome for both profiles, $f(V_1)=f(V_2)$, but produces a different outcome for the combined profile, then $f$ is not rationalizable.}
\end{lemma}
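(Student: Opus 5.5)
The plan is to argue by contrapositive: any rationalizable rule satisfies a consistency (reinforcement) property, and the hypothesis of the lemma says exactly that $f$ violates it. Suppose, for contradiction, that $f$ is rationalized by some statistical model. Under the i.i.d.\ setup of Section~\ref{sec:statmodels}, this means there is a family of vote distributions $p(\cdot\mid s,\tht)$ with
\[
f(V)\in\argmin_{s\in\SSS}\Bigl(-\log \max_{\tht} p(V\mid s,\tht)\Bigr)
\]
for every profile $V$, and $f(V)$ is the unique maximizer. I will first treat the case where the model has no nuisance parameter $\tht$, or where the maximizing $s$ does not depend on $\tht$, as for Mallows. Writing the profile likelihood as a product over votes gives
\[
\ell_V(s)=\sum_{v\in V}\log p(v\mid s).
\]
The key observation is that this is additive over disjoint sub-profiles: for the combined profile $V_1\cup V_2$,
\[
\ell_{V_1\cup V_2}(s)=\ell_{V_1}(s)+\ell_{V_2}(s).
\]

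Next, let $s^*=f(V_1)=f(V_2)$. By rationalizability, $s^*$ maximizes $\ell_{V_1}$ and also maximizes $\ell_{V_2}$. Therefore, for every $s$,
\[
\ell_{V_1\cup V_2}(s)\le \ell_{V_1}(s^*)+\ell_{V_2}(s^*)=\ell_{V_1\cup V_2}(s^*).
\]
So $s^*$ maximizes the combined likelihood. Moreover, it is the strict maximizer whenever it is strict for either piece. Hence the MLE on $V_1\cup V_2$ is $s^*$, so $f(V_1\cup V_2)=s^*$, which contradicts the assumption that the combined outcome differs.

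The main obstacle is making precise what ``rationalizable'' means when the model has extra parameters $\tht$ and when the MLE may have ties. With a nuisance parameter, the profile likelihood $\max_\tht$ is not additive, because the optimal $\tht$ for $V_1$ and for $V_2$ may differ. Following \cite{conitzer2012common}, I would adopt their definition, in which the rationalizing model is a fixed likelihood $p(v\mid s)$ with the outcome as the only parameter. Any $\tht$ is either fixed in advance or irrelevant to the argmax, as for Mallows, where $\hat s$ is the same for every $\theta$; this restores additivity. For ties, I would take the convention that $f$ outputs the unique MLE, or that rationalizability requires $f(V)$ to be among the maximizers with a consistent tie-breaking. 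In the set-valued case, the same inequality shows that the combined argmax is exactly the intersection of the two argmax sets when that intersection is nonempty, so the combined outcome again agrees with the common outcome. I will state this convention explicitly before running the additivity argument above.
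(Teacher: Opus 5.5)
Your argument is correct and is the standard one behind this lemma. The paper does not prove the lemma itself but imports it from \cite{conitzer2012common}, and that source argues as you do: independence makes the likelihood of the combined profile the product of the two sub-profile likelihoods, so an outcome maximizing both factors maximizes the product, and uniqueness of the MLE gives the contradiction. Under the set-valued convention, the combined argmax is the intersection of the two argmax sets, which gives the same conclusion. Your insistence on a single fixed likelihood $p(v\mid s)$ is the right reading, and it should say explicitly that the model must not depend on the number of voters $N$, because the additivity step compares likelihoods of profiles of different sizes. This matters here because the paper's own STV-R model ties its separation constraint on $k$ to $N$.
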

They use this lemma to prove that the IRV, STV-K, Buckland, Copeland, Maximin and Ranked Pair rules are not rationalizable under certain conditions. In the reverse direction, no known social choice rule is rationalized by a Plackett-Luce or Benter distribution.

\section{New Results on Ranked Choice Voting}\label{sec:RCVoting}

In this section, we extend results on non-rationalizability of common ranked choice election rules to realistic election scenarios. We subsequently extend an existing STV-R rationalizing model proposed by \cite{conitzer2012common} to partial vote elections. We further develop their model by stating a complete model likelihood, proposing an efficient estimation procedure, and providing tools for goodness-of-fit assessment.

\subsection{Extending non-rationalizability results to realistic election scenarios}
\cite{conitzer2012common} show that IRV is not MLE-rationalizable for an election with $J=3$ candidates. Due to its relevance for our case study in Section \ref{sec:CaseStudy}, we minimally extend their result to the case when $J\geq3$.
\begin{theorem}\label{thm:irv} Consider a single-winner election with $J\geq 3$ candidates, $a_1,\dots,a_J$, where $\V$ is the set of full rankings. Then, the IRV social choice rule is not rationalizable under an IID statistical model.
\end{theorem}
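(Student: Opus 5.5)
We will apply Lemma~\ref{lem:reinforcment}. It suffices to exhibit two profiles $V_1,V_2$ of full rankings over $a_1,\dots,a_J$ such that IRV elects the same candidate on each, but a different candidate on the combined profile $V_1\cup V_2$.

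\textbf{Three core candidates.} We first solve the case $J=3$, writing $a=a_1$, $b=a_2$, $c=a_3$. We use an explicit profile in which every elimination is strict, so no tie-breaking convention matters.
\begin{itemize}
\item $V_1$: $4$ votes $a\succ b\succ c$, $6$ votes $b\succ c\succ a$, $3$ votes $c\succ a\succ b$. The first-place counts are $(4,6,3)$, so no candidate has a majority of $13$. Then $c$ is eliminated and its votes transfer to $a$, giving $a$ $7$ votes against $6$. Hence $a$ wins.
\item $V_2$: $4$ votes $a\succ b\succ c$, $3$ votes $b\succ a\succ c$, $6$ votes $c\succ b\succ a$. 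The first-place counts are $(4,3,6)$. Then $b$ is eliminated and its votes transfer to $a$, giving $7$ against $6$. Hence $a$ wins.
\item $V_1\cup V_2$: the first-place counts are $a{:}8$, $b{:}9$, $c{:}9$ out of $26$. No candidate has a majority, so $a$ is strictly last and is eliminated. All $8$ of $a$'s votes are $a\succ b\succ c$ and transfer to $b$, giving $b$ $17$ votes against $9$. Hence $b$ wins.
\end{itemize}
Since $f(V_1)=f(V_2)=a\neq b=f(V_1\cup V_2)$, Lemma~\ref{lem:reinforcment} shows IRV is not rationalizable when $J=3$.

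\textbf{Extension to $J\ge 3$.} Form each full ranking by appending $a_4\succ a_5\succ\cdots\succ a_J$, in this fixed order, below the three-candidate ranking above. In every profile, $a_4,\dots,a_J$ receive zero first-place votes. The three core candidates have first-place counts of at least $3$, so none of the extra candidates can ever have a majority. Because $a_4,\dots,a_J$ are never ranked first by anyone, they are eliminated one at a time before any core candidate. Their elimination transfers no votes and changes no voter's top choice among the core candidates. Ties among the zero-vote candidates only affect the order in which they are removed, which is irrelevant.

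Once $a_4,\dots,a_J$ are all removed, each profile reduces exactly to the corresponding $J=3$ profile. The IRV outcomes are therefore $a_1$, $a_1$, and $a_2$ respectively, and Lemma~\ref{lem:reinforcment} gives the result for every $J\ge 3$.

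\textbf{Main obstacle.} The only delicate part is choosing the vote counts so that every elimination is strict and no first-round majority appears. Naive symmetric choices produce three-way or two-way ties in the combined profile, where the outcome would depend on a tie-breaking convention. The counts above are chosen to avoid this: $a$ is strictly last in the combined profile ($8<9$), and each individual profile has a unique last-place candidate. What remains is to verify these arithmetic checks and the reduction argument for the padded candidates.
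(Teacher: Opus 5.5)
Your proof is correct and takes the paper's route: a failure-of-reinforcement counterexample on three core candidates with $13$ votes per profile, fed into Lemma~\ref{lem:reinforcment}, with $a_4,\dots,a_J$ padded at the bottom so they are eliminated first without moving any votes. The only difference is the choice of ballots: in the paper's combined profile, $a_1$'s eight votes split $4$--$4$, which leaves a $13$--$13$ final round that the paper sidesteps by claiming only that the winner is not $a_1$, whereas your ballots send all eight to $a_2$ and produce an outright winner with no tie-breaking needed.
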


Next, note that STV-K generalizes IRV to multi-winner elections. However, no results on rationalizability of STV-K elections exist. The following theorem proves non-rationalizability of STV-K under certain conditions.
\begin{theorem}\label{thm:stv-k}
Consider an election with $J\geq 3$ candidates where $K$ candidates, such that $1\leq K\leq J-2$, are to be elected and $\V$ is the set of full rankings. Then, the STV-K social choice rule is not rationalizable under an IID statistical model.\footnote{When $K=1$, STV-K is equivalent to IRV, which was proven to be non-rationalizable. Conversely when $K=J$, there is no meaningful election as all candidates are elected. We leave open the question of rationalizability when $K=J-1$, noting this circumstance is unlikely to occur in practice.}
\end{theorem}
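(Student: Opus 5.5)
The plan is to apply Lemma~\ref{lem:reinforcment}. I will build two full-ranking profiles $V_1$ and $V_2$ whose STV-K outcome (the set of $K$ elected candidates) is the same, while the combined profile $V_1\cup V_2$ elects a different set. The idea is to embed the three-candidate IRV counterexample of \cite{conitzer2012common}, which is also the core of Theorem~\ref{thm:irv}, inside a larger election. In that election the extra seats and the extra candidates are handled mechanically and identically in $V_1$, $V_2$ and $V_1\cup V_2$.

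Label the candidates as follows:
\begin{itemize}
\item $K-1$ \emph{safe} candidates $s_1,\dots,s_{K-1}$;
\item three \emph{contested} candidates $a,b,c$;
\item $J-K-2\ge 0$ \emph{dummy} candidates $d_1,\dots,d_{J-K-2}$. This number is nonnegative precisely because $K\le J-2$.
\end{itemize}
In every ballot the dummies are ranked last, in a fixed order.

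Each profile contains, for every safe candidate $s_\ell$, a bloc of voters who rank $s_\ell$ first. Each such bloc is large enough to exceed the quota $N/(K+1)$ in the first round. I will choose every bloc so that its surplus transfers to $a,b,c$ in fixed proportions, for example equally after listing $a,b,c$ in a balanced set of orders. The resulting transferred mass is then a constant fraction of $N$.

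The remaining voters, those who rank a contested candidate first, follow the scaled \cite{conitzer2012common} pattern:
\begin{itemize}
\item in $V_1$ and in $V_2$, $a$ survives the $a,b,c$ elimination stage;
\item in the union, the first-round counts among $a,b,c$ are reordered so that a different candidate, say $b$, survives.
\end{itemize}
The winning set is therefore $\{s_1,\dots,s_{K-1},a\}$ for $V_1$ and $V_2$, but $\{s_1,\dots,s_{K-1},b\}$ for the union.

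I will verify the STV-K run in a fixed order.
\begin{enumerate}
\item The safe candidates are elected in round one, and their surpluses are transferred.
\item The dummies, which now have zero votes, are eliminated. If simultaneous zero-vote eliminations are considered ambiguous, I will give the dummies tiny, distinct first-place counts whose votes transfer straight to the $s_\ell$ or evenly to $a,b,c$; their ordering then does not affect the contested stage.
\item Among $a,b,c$ there is one seat left, and the lowest-count candidate is eliminated.
\item With two candidates left for one seat, the one with more votes wins. This happens either by crossing the quota or by being the last candidate standing once the number of remaining candidates equals the number of open seats.
\end{enumerate}

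Because every count is a sum over voter types, the union's tallies are the sums of the $V_1$ and $V_2$ tallies. The safe and dummy behavior is therefore preserved, and only the contested comparison matters, exactly as in the IRV argument.

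I expect the main obstacle to be the bookkeeping around surplus transfers and quotas. Specifically, I must ensure three things:
\begin{itemize}
\item no contested candidate reaches the quota prematurely in round one, which would elect it before the elimination stage;
\item the transferred surpluses do not overturn the intended orderings among $a,b,c$;
\item no ties arise.
\end{itemize}
To handle this, I will first pick the Conitzer-type contested counts to be small relative to the safe blocs, with generic (tie-free) integers. I will then scale the safe blocs so that each exceeds the quota by an amount whose transfer adds equal mass to $a,b,c$. Equal additions preserve all pairwise differences among $a,b,c$, and hence preserve the elimination order. Finally I will check the quota inequalities explicitly for $V_1$, $V_2$ and $V_1\cup V_2$.
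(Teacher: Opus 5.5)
Your proposal is correct and follows the paper's proof. There, $e_1,\dots,e_{K-1}$ are elected outright in round one, and the three-candidate IRV counterexample (first-place counts $4,3,6$) is embedded among $e_K,a_1,a_2$ for the last seat. The remaining $J-K-2$ candidates get no first-place votes, so $e_K$ wins in both $V_1$ and $V_2$ but is eliminated first in $V_1\cup V_2$, with $8$ votes against $9$ and $9$. The only difference is surplus bookkeeping: the paper gives each safe candidate exactly $7$ (resp.\ $14$) votes and asserts that nothing transfers. Your balanced transfer of surpluses to $a,b,c$ keeps the argument sound even under the literal ``more than $N/(K+1)$'' quota, where each safe candidate does carry a small fractional surplus of $1/(K+1)$ (resp.\ $2/(K+1)$).
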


Proofs of theorems \ref{thm:irv} and \ref{thm:stv-k} straightforwardly apply lemma~\ref{lem:reinforcment} via counterexample and are thus relegated to the appendix.

\subsection{A complete statistical model rationalizing STV-R}\label{sec:STVR}
As noted earlier, STV-R is rarely used as an election procedure. But it is often used to study results of ranked choice elections. \cite{conitzer2012common} proved the following model rationalized STV-R:
\begin{equation}\label{model:contizer_stvr}
    \begin{aligned}
    p_\text{CS}(V|s,k) &\propto \prod_{i=1}^N\prod_{j=1}^J k_j^{\delta_j(v_i,s)}\\
    \delta_j(v_i,s) &= I\big\{v_i(1)=s_j \text{ after removal of }s_{j+1},\dots,s_{J} \text{ from } v_i\big\}\\
    1&>k_1 >> k_2 >> \dots >> k_J> 0,
\end{aligned}
\end{equation}
where each vote $v_i$ is a complete ranking of the $J$ candidates, $s$ is the social choice based on STV-R, and $k\in(0,1)^J$ are the relative candidate strengths.

As it will prove useful, we paraphrase an explanation by \cite{conitzer2012common} for why the above model rationalizes STV-R: For fixed and well-separated values of $k$, the likelihood $p_{CS}(V|s,k)$ is dominated by factors with $k_J$. Thus to maximize the likelihood, one must minimize the number of times $k_J$ appears. Mathematically, we want a ranking $s$ that minimizes $\sum_{i=1}^N\delta_J(v_i,s)$. This occurs when the candidates with the fewest first-place votes appears last in $s$, whom we call $s_J$. Note that in STV-R, $s_J$ is eliminated first. Subsequently, the remaining terms in the likelihood are dominated by the factors with $k_{J-1}$; we seek to minimize the number of times $k_{J-1}$ appears by finding an $s$ (given that candidate $s_J$ is last in $s$) that minimizes $\sum_{i=1}^N\delta_{J-1}(v_i,s)$. But this is precisely the candidate with the fewest first-place votes after transferring votes cast for $s_J$ to those voters' second-place candidates. Again, STV-R eliminates this same candidate second. The process continues similarly until a complete ranking $s$ of candidates is obtained that matches the STV-R ranking.

The above rationalizing model has important limitations. 
First, it applies only to votes as complete rankings, whereas partial rankings are common in elections with many voters or candidates.
Second, it is unclear how one can conduct parameter estimation as both the likelihood and separability constraints on $k$ are not fully specified.
Third, \cite{conitzer2012common} provide no means for assessing model goodness-of-fit. 
We fill in those gaps, starting with a precise statistical specification of a rationalizing model that applies to elections under partial rankings.

\begin{defn}[STV-R rationalizing model under partial votes]
    Let $X$ be a set of $J$ candidates in an election and $\V_r$ the set of partial rankings over $X$ of length $r$, $1\leq r\leq J$. Let $V$ be a profile of $N$ partial rankings, where vote $v_i$ is of length $r_i$. Then, define the probability of observing profile $V$ according to:
    \begin{equation}\label{model:generalized_stvr}
    \begin{aligned}
    p(V|s,k) &= \prod_{i=1}^N\frac{\prod_{j=1}^J k_j^{\delta_j(v_i,s)}}{C_{r_i}(k)}\\
    \delta_j(v_i,s) &= I\big\{v_i(1)=s_j \text{ after removal of }s_{j+1},\dots,s_{J} \text{ from } v_i\big\},\\
    C_r(k) &= \sum_{\delta\in\Delta}m_\delta(r)\prod_{j=1}^J k_j^{\delta_j},\\
    1 &> k_1 > k_2 > \dots > k_J > 0,
\end{aligned}
\end{equation}
where $s$ is the population candidate ranking, $k\in(0,1)^J$ is the vector of candidate strengths, $\Delta=\{0,1\}^J$ is the set of distinct possible values for $\delta$,  and $m_\delta$ is the number of votes $v'\in\V_r$ in equivalence class $\delta$.
\end{defn}

The above model applies to elections under partial votes, including those where the length of each vote varies. We assume the number and length of each vote is fixed and known. Note that $p(v|s,k)$ constitutes a probability distribution: $p(v|s,k)\geq0$ for all votes $v\in\V$ as the product of nonnegative numbers, and
$$\sum_{v\in\V_r} p(v|s,k) = \sum_{v\in \V_r}\frac{\prod_{j=1}^J k_j^{\delta_j(v_i,s)}}{C_{r_i}(k)}= \sum_{\delta\in \Delta}m_\delta(r)\frac{\prod_{j=1}^J k_j^{\delta_j}}{C_{r_i}(k)}=\frac{\sum_{\delta\in \Delta}m_\delta(r)\prod_{j=1}^J k_j^{\delta_j}}{\sum_{\delta\in\Delta}m_\delta(r)\prod_{j=1}^J k_j^{\delta_j}}=1.$$

For reasons which will become clear when considering goodness-of-fit, we next explain a peculiar property of the STV-R rationalizing model via the following lemma.
\begin{lemma}\label{lem:Criteria2} Assume the model in equation \ref{model:generalized_stvr} and fix $s$ and $k$. For votes $v_1,v_2\in\mathcal{V}$ of equal length $r$, if $v_1(1) = v_2(1) = s_1$ then $p(v_1|s,k) = p(v_2|s,k)$. That is, all votes which rank the overall winner in first place have equal probability.
\end{lemma}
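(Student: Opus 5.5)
The plan is to compute the indicator vector $\delta(v,s)=(\delta_1(v,s),\dots,\delta_J(v,s))$ directly for any vote $v$ of length $r$ with $v(1)=s_1$. We will show that it does not depend on the rest of $v$. Since the normalizing constant $C_r(k)$ depends only on $r$ and $k$, the probability $p(v\mid s,k)=\prod_j k_j^{\delta_j(v,s)}/C_r(k)$ is then the same for $v_1$ and $v_2$.

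First consider $\delta_j(v,s)$ for fixed $j$. By definition it is the indicator that, after deleting $s_{j+1},\dots,s_J$ from $v$, the top remaining candidate is $s_j$. The candidate $s_1$ is never deleted, because only indices $j+1\geq 2$ are removed. Since $v(1)=s_1$ sits at the top of $v$, it stays at the top of the reduced ballot for every $j$. So the reduced ballot's first entry is always $s_1$.

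It follows that $\delta_j(v,s)=I\{s_1=s_j\}$, which equals $1$ if $j=1$ and $0$ otherwise. Hence $\prod_{j=1}^J k_j^{\delta_j(v,s)}=k_1$ for every such vote, and
\[
p(v_1\mid s,k)=\frac{k_1}{C_r(k)}=p(v_2\mid s,k).
\]
The equal-length hypothesis is used only so that both votes share the same $C_r(k)$.

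The only point needing care is the convention for a partial ballot that becomes empty after deletions. Such a ballot would have no first entry, so $\delta_j$ would be defined as $0$. This situation cannot arise here, because $s_1$ always survives the deletions. So there is no real obstacle, and the argument is just unwinding the definition of $\delta_j$.
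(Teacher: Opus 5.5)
Your proposal is correct and follows essentially the same route as the paper. Both observe that when $v(1)=s_1$ the candidate $s_1$ survives every deletion, so $\delta(v,s)=(1,0,\dots,0)$ regardless of the rest of $v$. Equal length then gives the same normalizer $C_r(k)$, and you simply spell out these steps, which the paper leaves implicit.
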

\begin{proof}
    Let $v\in\V_r$ be a vote such that $v(1)=s_1$. Then, $\delta(v,s)=(1,0,0,0,\dots,0)$ regardless of $v(2),\dots,v(J)$ by definition of $\delta_j$. Consequently, the probability mass function $p(v|s,k)$ is equivalent for all such votes.
\end{proof}

We proceed with maximum likelihood estimation of $s$ and $k$. 
First, note that $C_r(k)$ is a normalizing constant that depends only on $J$, $r$, and $k$ and is independent of parameter $s$ and observed votes $V$. Thus $C_r(k)$ is efficient to compute since calculation of $m_\delta(r)$ needs only be performed once for each combination of $J$ and $r$.
Second, we set $k_1$ to $e^{-0.001}$ for identifiability. Any fixed value for $k_1$ in the unit interval leads to an identifiable model; the value $e^{-0.001}$ was chosen simply because it is close to 1.
Third, to ensure the STV-R ranking $s$ is the MLE $\hat{s}$, we require sufficient separation of values $k$. A sufficient condition is to require $(\prod_{\ell=1}^{j-1}k_\ell)^N > k_j$ for all $j=2,\dots,J$. However, this condition is stringent and can often be relaxed to permit better fit to data; see appendix for details.
Fourth, one may straightforwardly determine the MLE $\hat{k}$ given the previous considerations, where $\hat{s}=\text{order}(\hat{k})=s$.

We conclude this section with a discussion of goodness-of-fit. To assess model fit with discrete rankings, a typical approach is to compare the observed counts of each possible ranking to the expected counts under a fitted model \citep[p. 34]{agresti1996introduction}. However, when $J\geq 5$, this form of assessment is impractical due to the large domain of the data. Instead, one may group rankings into lower order marginals relevant to the model and similarly compare observed and expected counts \citep{cohen1983assessing}. This technique has been adapted by, e.g., \cite{feigin1978model,yu2000bayesian}, and \cite{mollica2022remarkable}. 

In that spirit, we propose assessing goodness-of-fit via the following three criteria:\footnote{Details of how to calculate the observed and expected counts for the STV-R rationalizing model are provided in the appendix.}
\begin{enumerate}
    \item \ For each candidate $s_j$, $j=1,\dots,J$, compare the observed number of first place votes $N_j^{(1)}$ and the expected number of first place votes $N_j^{*(1)}$ under the fitted model.
    \item \ Among votes where candidate $s_1$ is in first place, for each candidate $s_j$, $j=2,\dots,J$, compare the observed number of second place votes $N_j^{(2|s_1)}$ and the expected number of second place votes $N^{*(2|s_1)}$ under the fitted model.
    \item \ For each pair of candidates $s_i, \ s_j,\ i\neq j$, compare the observed number of votes in which $s_i$ is ranked above $s_j$ to the expected number under the fitted model.
\end{enumerate}
Criterion 1 assesses how a model captures first-place votes, which are important in STV-R elections because candidates are eliminated largely based on the number of first-place votes received. Similar criteria appear in \cite{yu2000bayesian} and \cite{mollica2017bayesian}.
Criterion 2 partially assess how well observed votes satisfy the assumption stated in Lemma \ref{lem:Criteria2}. Although criterion 2 was developed to test this specific assumption of the STV-R rationalizing model, it may be still be applied to other statistical models.
Criterion 3 assesses how well a model captures voters' pairwise preferences among candidates. There are $\frac{J(J-1)}2$ such pairs in an election with $J$ candidates. Thus, criterion 3 provides a more granular metric of model fit. Similar criteria appear in \cite{cohen1983assessing}, \cite{mollica2017bayesian}, and \cite{pearce2024bayesian}.

The concordance between observed and expected counts for each criteria may be formally assessed by conducting a hypothesis test, e.g., a Pearson chi-squared goodness-of-fit test. In the presence of partial rankings, we note that computing a null distribution for the p-value calculation is challenging. 
Further, since elections often comprise thousands (if not millions) of votes, we should expect p-values to be extremely small even when data reasonably satisfies a model's assumptions \citep{cohen1983assessing}. Instead, to identify severe deviations, we heuristically assess model fit on the basis of criteria 1--3 by visually comparing observed and expected counts.

\section{Case Study: 2025 New York City Democratic Mayoral Primary}\label{sec:CaseStudy}

The June 2025 New York City (NYC) Democratic mayoral primary election pitted an array of progressive candidates---including State Assemblymember Zohran Mamdani, Comptroller Brad Lander, City Council Speaker Adrienne Adams, former Comptroller Scott Stringer, state Senator Zellnor Myrie, and former state Assemblymember Michael Blake---against a more centrist option, former New York state Governor Andrew Cuomo. In the election, voters could rank up to 5 candidates from a pool of 11 (or write in an undeclared candidate). Then, a winner was tabulated using a minor variant of the standard IRV procedure in which undeclared write-in candidates were batch eliminated in the first round. Cuomo consistently led Mamdani in opinion polls until the month of the election,\footnote{Nick Reisman,``Despite missteps, Andrew Cuomo maintains commanding lead in New York mayor’s race," \emph{Politico}, 14 May 2025, \url{https://www.politico.com/news/2025/05/14/andrew-cuomo-lead-poll-mayor-00347151}.} when Mamdani and other progressive candidates turned the ranked choice voting procedure to their advantage. Progressive candidates formed alliances encouraging their supporters to rank multiple members of the progressive bloc, while simultaneously urging voters not to rank Cuomo at all.\footnote{For example, Mamdani and Lander each recommended their voters rank the other candidate second (Emily Ngo, ``Zohran Mamdani, Brad Lander are cross-endorsing in race for New York City mayor," \emph{Politico}, 13 June 2025, \url{https://www.politico.com/news/2025/06/13/mamdani-lander-cross-endorsing-mayor-00405357}). This alliance paid off for both Mamdani, who was elected mayor in November 2025, and Lander, who rode Mamdani's endorsement to victory over Democratic incumbent Dan Goldman in the June 2026 Democratic primary for New York's 10th Congressional district (Joe Anuta and Bill Mahoney, ``The powerbroker: Mamdani-endorsed candidates obliterate old guard Dems in NYC,'' \emph{Politico}, 24 June 2026, \url{https://www.politico.com/news/2026/06/24/the-powerbroker-mamdani-endorsed-candidates-obliterate-old-guard-dems-in-nyc-00974317}).} Mamdani won the election, beating Cuomo in the final round after all other candidates were eliminated.

In our analysis, we remove all undeclared write-in candidates, mimicking the first round of the actual election. In total, $1{,}071{,}081$ voters ranked at least one declared candidate. Among those voters, 22\% ranked only one candidate and 47\% ranked the maximum five. The remaining 31\% of voters were approximately evenly divided between ranking two, three, and four candidates.

\begin{table}[h]
    \centering
    \begin{tabular}{lrr}
    Ballot Characteristics & Count &  Percent \\
    \hline
    \multicolumn{3}{l}{\emph{Three most common ballots; no other appeared $>20{,}000$ times}}				\\[2pt]
Cuomo alone	&	$176{,}898$ 	&	 $16.5\%$ \\
$\mathrm{Mamdani}\prec \mathrm{Lander}\prec \mathrm{Adams}\prec \mathrm{Myrie}\prec \mathrm{Blake}$ 	&	$75{,}274$ 	&	 $7.0\%$ \\
$\mathrm{Mamdani}$ alone	&	$41{,}645$ 	&	 $3.9\%$ \\[6pt]
				
\multicolumn{3}{l}{\emph{The four most frequently ranked candidates}}				\\[2pt]
Ballots including Mamdani 	&	$639{,}693$ 	&	 $59.7\%$ \\
Ballots including Lander 	&	$631{,}838$ 	&	 $59.0\%$ \\
Ballots including Adams     &   $578{,}320$     &    $54.0\%$ \\
Ballots including Cuomo 	&	$489{,}682$ 	&	 $45.7\%$ \\[6pt]
				
\multicolumn{3}{l}{\emph{Prevalence of ranked voting by Mamdani and Cuomo voters}}				\\[2pt]
$>1$ candidate ranked including Mamdani but not Cuomo 	&	$485{,}071$ 	&	 $45.3\%$\\
$>1$ candidate ranked including Cuomo but not Mamdani 	&	$199{,}807$ 	&	 $18.7\%$\\[6pt]
				
\multicolumn{3}{l}{\emph{One common and three uncommon combinations of popular candidates}}				\\[2pt]
Ballots including Lander and Mamdani 	&	$502{,}479$ 	&	 $46.9\%$ \\
Ballots including Lander and Cuomo 	    &	$153{,}823$ 	&	 $14.4\%$ \\
Ballots including Mamdani but not Lander&	$137{,}214$	    & 	 $12.8\%$ \\
Ballots including Mamdani and Cuomo 	&	$112{,}977$ 	&	 $10.5\%$ \\
    \hline
    \end{tabular}
    \caption{Ballot groupings in the 2025 NYC Democratic mayoral primary election, out of $1{,}071{,}081$ ballots.}
    \label{tab:common_rankings}
\end{table}
Table \ref{tab:common_rankings} reveals additional patterns within voters' rankings of candidates. The most common specific ballot ranked only Cuomo. This choice was more than twice as common as the next most popular ranking, $\mathrm{Mamdani}\prec \mathrm{Lander}\prec \mathrm{Adams}\prec \mathrm{Myrie}\prec \mathrm{Blake}$, which in turn was more common than ballots cast for Mamdani alone. However, ranked choice voting allows for a vast number of permutations of choices, so examining only the most common ballots overlooks several important patterns.  

Figure \ref{fig:nyc_EDA} displays a stacked bar chart of resulting votes by rank level for each candidate. Despite appearing alone on more ballots than any other candidate, Cuomo was only the fourth-most commonly ranked candidate overall, well behind Mamdani, Lander, and Adams, who were the three top voter-getters in the progressive bloc. Notably, Lander was ranked on nearly as many ballots (59.0\%) as Mamdani (59.7\%). Two other patterns reveal the highly non-random differences in overall ranking behavior by voters with different preferences for their top ranking. First, the progressive bloc and Cuomo had distinct bases of support: just 10.5\% of ballots ranked both Mamdani and Cuomo and only 14.4\% ranked both Lander and Cuomo, but 46.9\% ranked both Mamdani and Lander---a large majority of voters ranking either Mamdani or Lander at all. Second, the progressive bloc had far more voters willing to vote for not just one candidate but a slate that excluded their main rival: 75.8\% of voters who selected Mamdani also ranked another candidate while omitting Cuomo, but just 40.8\% of Cuomo voters selected another candidate while omitting Mandani.
\begin{figure}[!ht]
    \centering
    \includegraphics[width=.8\textwidth]{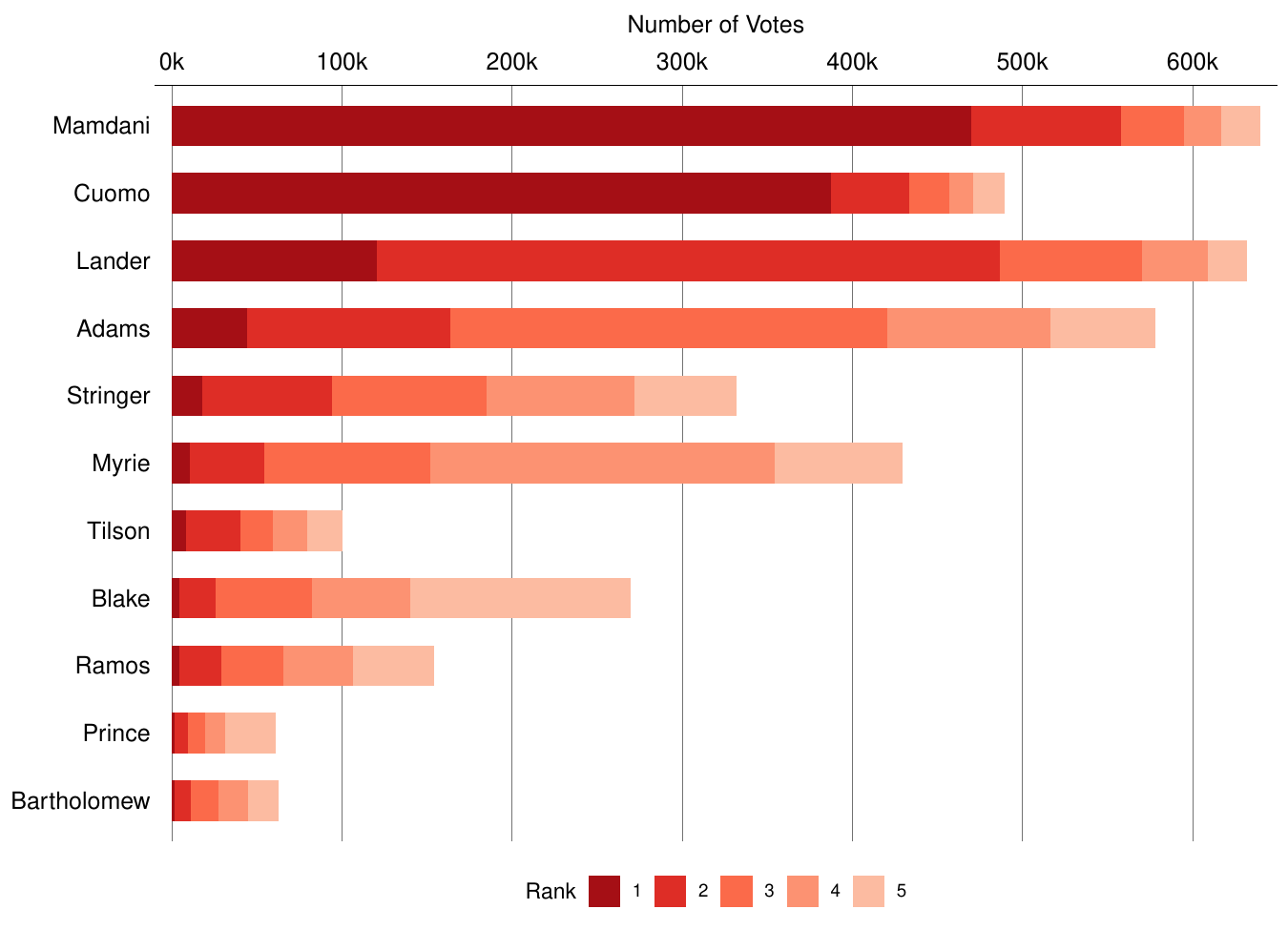}
    \caption{Votes in the 2025 NYC Democratic mayoral primary election by candidate and rank level. Candidates are ordered from top to bottom by the number of first-place votes received.}
    \label{fig:nyc_EDA}
\end{figure}

\subsection{Model estimation and comparison}

We now estimate and compare statistical models to the observed votes.\footnote{Code to replicate our analyses is available at 
REDACTED.
} Specifically, we consider the STV-R rationalizing model (equation \ref{model:generalized_stvr}) and latent class mixtures of either Plackett-Luce (equation \ref{eq:PL}) or Benter (equation \ref{eq:Benter}) distributions.\footnote{To our knowledge, the mixture of Benter distributions we fit is novel. Most relatedly, \cite{gormley2008exploring} fit a latent class mixture of Benter distributions with overall (i.e., non-class specific) dampening parameters.} Mixtures are estimated because we suspect voters belong to ideologically distinct factions, which may be detected under a mixture framework. For each distribution, we fit mixtures of between $K=1$ and $K=6$ latent classes (henceforth referred to as voting \textit{blocs}). We do not estimate the STV-R rationalizing model under a mixture framework as the rationalizing interpretation disappears beyond one bloc.

Figure \ref{fig:nyc_modelselectioncriteria} displays BIC and ICL criteria values (equation \ref{eq:bicicl}) for each fitted model. 
The STV-R rationalizing model performs worse than Plackett-Luce and Benter models under any number of latent classes. BIC consistently favors Benter models over Plackett-Luce. BIC favors Benter models with more latent classes, with flattening decreases beyond 2 latent classes.
Alternatively ICL is smallest in a Benter with 4 latent classes, although the 3-class Benter has a similarly small ICL.
On the basis of these criteria and in the interest of parsimony, we select the 3-class mixture of Benter distributions for further study. In what follows, we present results and examine the fit of the 3-class mixture of Benter distributions and the STV-R rationalizing model based on their estimated parameters. While the STV-R does not fit the data nearly as well as the 3-class Benter model, we examine it in detail for illustrative purposes since it has not been used in practice before.
\begin{figure}[!ht]
    \centering
    \includegraphics[width=.8\textwidth]{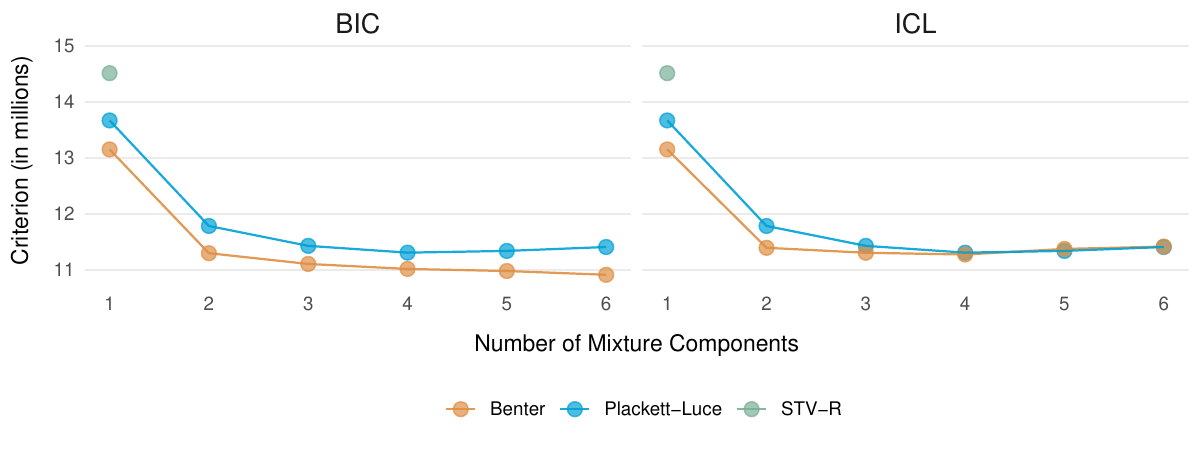}
    \caption{BIC and ICL model selection criteria for STV-R rationalizing model and latent class mixtures of Plackett-Luce and Benter models with up to 6 classes.}
    \label{fig:nyc_modelselectioncriteria}
\end{figure}

\subsection{Model fit and interpretation}

Figure \ref{fig:nyc_BenterK3} displays maximum likelihood estimates (MLE) of class-specific parameters in the 3-class mixture of Benter distributions. The left panel displays relative candidate worth parameters $\hat{p}_{jk}$, by candidate $j$ and class $k$. The right panel displays estimated dampening parameters, $\alpha_{rk}$, by rank level $r$ and class $k$.
\begin{figure}[tb]
    \centering
    \includegraphics[width=.8\textwidth]{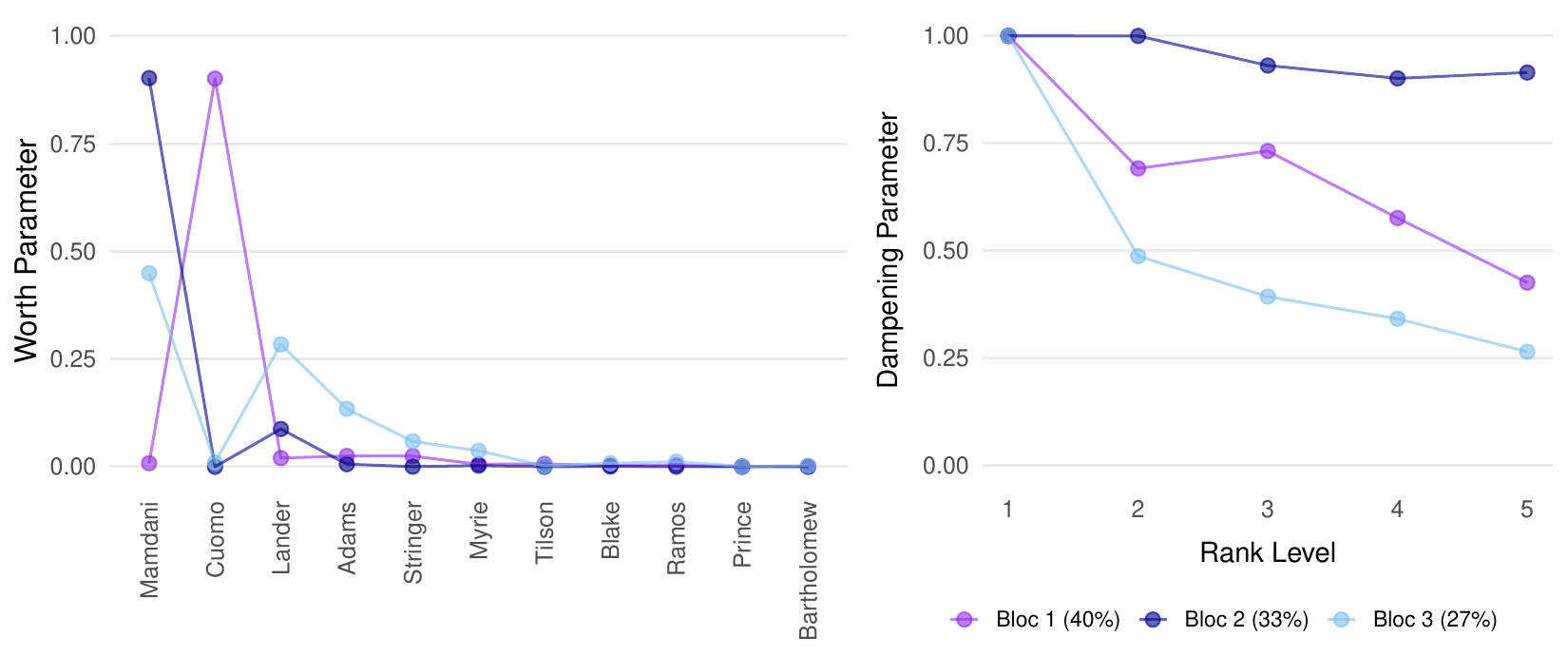}
    \caption{MLEs of candidate worth parameters (left) and rank-specific dampening parameters (right) by voter bloc (i.e., latent class) in a 3-class mixture of Benter distributions.}
    \label{fig:nyc_BenterK3}
\end{figure}
Table \ref{fig:nyc_mle} displays MLEs of model parameters, $\hat{k}$, by candidate for the STV-R rationalizing model. The order of $\hat{k}$ from largest to smallest is the STV-R ranking, $s$, by construction.
\begin{table}[!ht]
    \centering
    \begin{tabular}{l|c||l|c||l|c}
    Candidate & $\hat{k}$ & Candidate & $\hat{k}$ & Candidate & $\hat{k}$ \\
    \hline
  Mamdani   & $e^{-0.001}$    & Stringer  & $e^{-2.428}$  & Ramos       & $e^{-5.990}$   \\
  Cuomo     & $e^{-0.044}$    & Myrie     & $e^{-2.653}$  & Prince      & $e^{-8.326}$   \\
  Lander    & $e^{-1.002}$    & Tilson    & $e^{-3.100}$  & Bartholomew & $e^{-11.58}$   \\  
  Adams     & $e^{-1.822}$    & Blake     & $e^{-4.309}$  & &
    \end{tabular}
    \caption{MLEs of model parameters, $\hat k$, based on the STV-R rationalizing model. The order of candidates based on $\hat{k}$ is the STV-R ranking by construction.}
    \label{fig:nyc_mle}
\end{table}

We wish to assess if either model provides acceptable absolute fit to the observed votes based on goodness-of-fit criteria 1--3 proposed in section \ref{sec:STVR}. Recall that criterion 1 concerns the number of first place votes received by each candidate, criterion 2 the number of second place votes received by each candidate among voters who ranked Mamdani in first place, and criterion 3 the relative orderings of all candidate pairs. Figures \ref{fig:GOF1} and \ref{fig:GOF2} compare absolute and percent differences, respectively, in observed and expected counts for each criterion and model.
\begin{figure}[h!!]
    \centering
    \includegraphics[width=.8\textwidth]{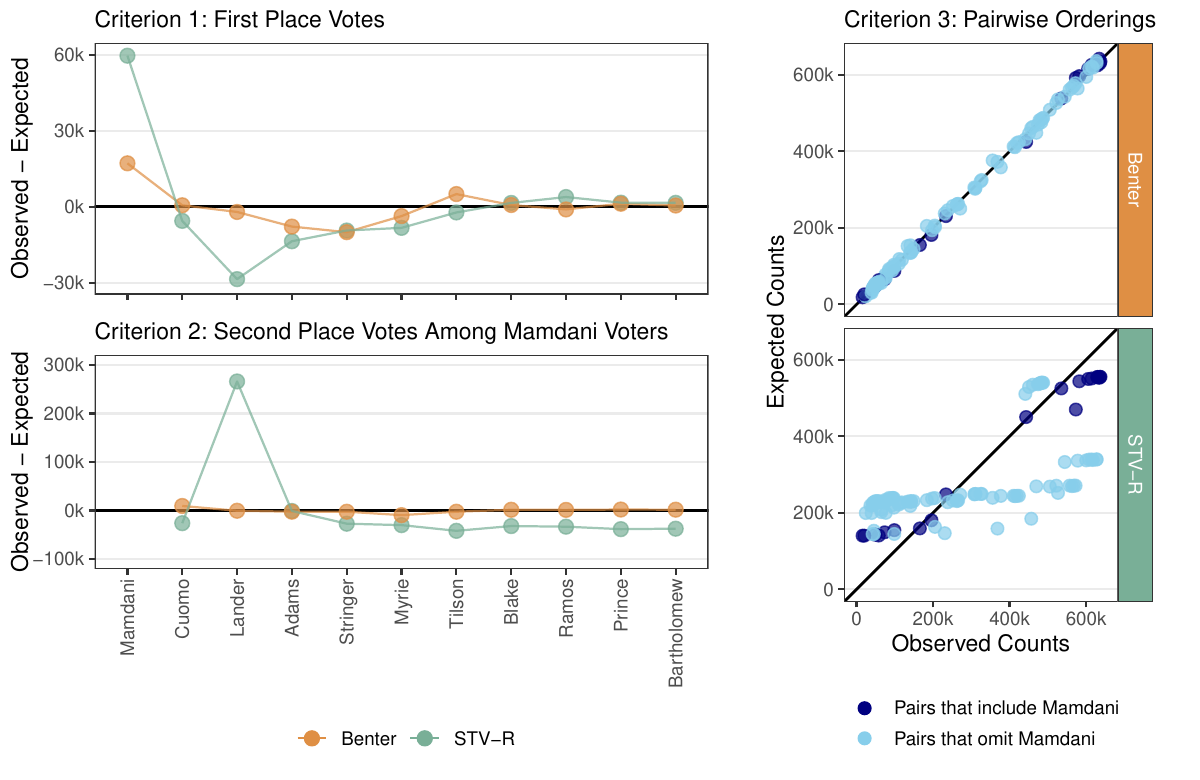}
    \caption{Comparison of STV-R rationalizing model and 3-class mixture of Benters based on difference between observed and expected counts across criteria 1--3. Solid black lines indicate perfect model fit.}
    \label{fig:GOF1}
\end{figure}
\begin{figure}[h!!]
    \centering
    \includegraphics[width=.8\textwidth]{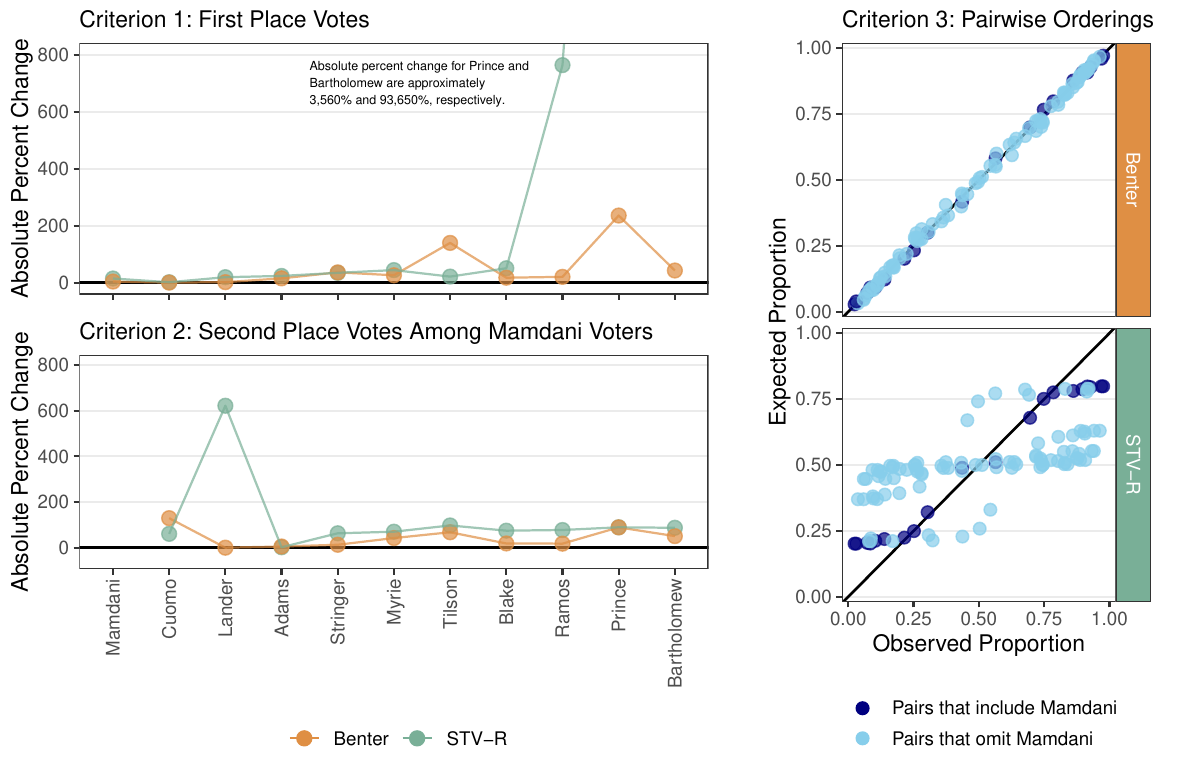}
    \caption{Comparison of STV-R rationalizing model and 3-class mixture of Benters based on percent change (criteria 1--2) or proportions (criterion 3) between observed and expected counts. Solid black lines indicate perfect model fit.}
    \label{fig:GOF2}
\end{figure}

We find that the 3-class Benter model provides acceptable fit to observed vote data on the basis of all three criteria. For criterion 1, observed and expected counts of first-place votes are similar for all candidates. The difference is largest in magnitude for Mamdani, for whom the percent error rate is only 3.8\%. For criteria 2 and 3, observed and expected counts are nearly identical for all candidates and candidate pairs.

In contrast, the STV-R rationalizing model poorly fits the observed votes on the basis of all three criteria. 
For criterion 1, observed and expected first-place vote counts differ by approximately $60$k and $29$k votes, respectively, for Mamdani and Lander. 
For criterion 2, Lander received approximately $266$k more votes in second place than expected among voters who ranked Mamdani first, indicating the model's inability to capture the large bloc of voters strategically supporting the progressive candidate alliance. Strikingly, the STV-R model fit based on this STV-R-specific criterion is worse for all but one candidate (Adams) than that of the Benter mixture model.
For criterion 3, observed pairwise orderings between candidates differ substantially from those expected under the model. Notably, pairs that include Mamdani are not well-captured by the model. These pairs are especially important both because Mamdani won the overall election and because he strategically allied with certain other candidates to reach that outcome.

Now that we have established the 3-class Benter provides acceptable fit, we interpret the estimated model parameters (refer to Figure \ref{fig:nyc_BenterK3}).
Recall that $\pi_k$ represents the proportion of voters in class $k$.
\textit{Worth} parameters $\theta_{jk}$ may be interpreted as the proportion of voters in bloc $k$ who prefer candidate $j$ in first place.
Then, \textit{dampening} parameters $\alpha_{rk}\in[0,1]$ may be interpreted as the inverse level of additional randomness introduced at rank level $r$ in bloc $k$, such that $\alpha_{rk}=1$ indicates no additional randomness and $\alpha_{rk}=0$ indicates each candidate has equal probability of selection.

Bloc 1 comprises approximately 40\% of the votes and highly favors Cuomo, while placing little worth on other candidates. This bloc represents support for the major centrist candidate. 
Blocs 2 and 3 comprise 33\% and 27\% of the vote, respectively, and collectively capture progressive voters. Both blocs favor Mamdani more than any other candidate, yet differ in the exclusiveness of that support: bloc 2 voters place only a small amount of worth on Lander, whereas bloc 3 voters more evenly split their support between Mamdani and Lander, and even reserve worth for Adrienne Adams. 
Dampening parameters are consistently high for bloc 2 and lower for blocs 1 and 3. Thus bloc 1 strongly favors Cuomo and exhibits weak preferences among the remaining candidates, bloc 2 strongly favors Mamdani and then Lander, and bloc 3 similarly favors Mamdani and Lander but with less strict preferences on the order of Mamdani and Lander.

\section{Discussion}\label{sec:discussion}

This paper investigates the intersection of social choice theory and statistical preference modeling, especially in non-epistemic settings applicable to voting in democratic elections, surfacing fundamental tensions that have previously gone unexamined. Statistical reasoning has been part of social choice theory since its inception \citep{condorcet1785essay}. More recently, the literature on \textit{rationalizability} has drawn precise connections between outcomes of social choice rules and statistical estimators of population preferences derived from a model \citep{young1995optimal,conitzer2009preference,pivato2013voting}. Examining this literature from a statistical vantage point reveals a deep structural mismatch: whereas social choice theory is principally concerned with aggregating individual preferences to produce winners, statistical preference modeling is more broadly oriented toward inference on individual preferences and the distributions thereof. We show that widely-used social choice rules for ranked choice elections either admit rationalizing models with little-to-no capacity to fit distributions of actual votes, or admit no rationalizing models at all. Conversely, we find empirically that well-fitting statistical models tend to be mixture models that have the capacity to capture systematic differences in preferences. However, mixture models offer no clear mechanism for making a social choice. Even a na\"ive weighting of estimated preferences among distinct voting blocs in a mixture model would be difficult to explain to voters, and, as such, is unlikely to be accepted as a social choice. Together, our findings suggest that social choice theory and statistical modeling may be considered as competing frameworks that offer distinct insights when applied to votes.

Figure \ref{fig:venn} summarizes the social choice rules and statistical models studied in this paper, where their intersection represents rationalizability. On the left, we see non-rationalizable social choice rules. Notably, these include instant runoff voting (IRV) and multi-winner single transferable vote (STV-K), which satisfy many desirable social choice axioms and special interest groups purport them to produce more representative outcomes \citep{benade2021ranked,richie2023case}. However, these non-rationalizable social choice rules are ultimately algorithms with no underlying generative models. Thus, inferring the multidimensional and complex ideologies of voters based on these rules is not possible.
\begin{figure}[h!]
    \centering
    \includegraphics[width=0.6\linewidth]{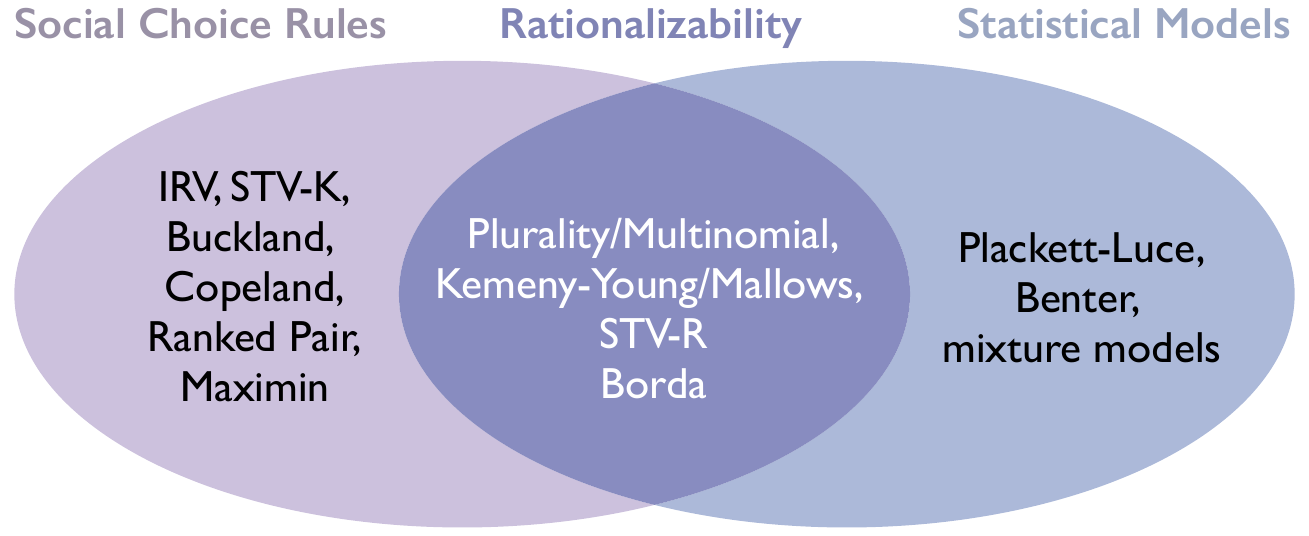}
    \caption{Venn diagram between social choice rules and statistical models, with their intersection representing rationalizability.}
    \label{fig:venn}
\end{figure}

On the right (in Figure \ref{fig:venn}), we see statistical models that have no corresponding social choice rules. These models include the Plackett-Luce and Benter distributions, perhaps estimated under a latent class mixture modeling framework. \textit{If they provide reasonable fit to data}, these models may aid inference on voter ideologies or coalitions. They can be useful for making interpretable or falsifiable statements about voter preferences, describing prominent voter blocs, or making probabilistic predictions of future elections. 
However, to be accepted as social choice rules, these models need to offer a clear and easily computable mechanisms for calculating winners. In the absence of such mechanisms, as in the case for Plackett-Luce and Benter distributions, statistical models  are not going to be adopted as societal decision-making tools.

The center in Figure \ref{fig:venn} displays social choice rules that are rationalized by statistical models. Rationalizable rules potentially offer a ``best of both worlds" approach to social choice: an axiomatically plausible way of determining a winner while simultaneously estimating, inferring, and predicting the preferences of a constituency. Unfortunately, this dual potential is not realized in any of the cases considered in this paper. Plurality rule and the Borda count, while widespread and simple, are highly susceptible to perverse social choice outcomes (most notably the spoiler effect). Kemeny-Young's complete ranking is difficult to explain to voters, and has been shown to be computationally expensive and unstable~\citep{ali2012experiments}, limiting its utility in real-world social choice settings.
Finally, this paper elucidated limitations of STV-R rationalizing models to fit realistic voting data. 

We demonstrate our claims empirically in a case study of the 2025 NYC Democratic primary election. We first highlight the influence of the ranked choice voting procedure in this election. In an election with many candidates, voters may be reasonably described as supporting the well-known but scandal-ridden centrist Andrew Cuomo or an array of progressive challengers (including the initially lesser-known Zohran Mamdani). In a traditional plurality vote election, the former group may have won due to vote splitting among the latter. Instead, IRV permitted a clear Mamdani win. That said, the tabulations alone miss an important feature of the voting population: although Cuomo came in second place, he is preferred second in very few voters' minds. Cuomo was ranked first on most ballots in which he appeared, and rarely ranked at all by voters supporting any progressive candidate (e.g., Mamdani, Lander, Adams, Myrie, Blake, or Stringer). Thus, we may consider the votes multi-modal, a characteristic lost in a social choice tabulation designed to make a single compromise ordering. As such, we found the unimodal STV-R rationalizing model to provide exceptionally poor fit to observed votes based on three goodness-of-fit criteria that capture important aspects of the votes, including first- and second-place votes and bivariate relationships between all pairs of candidates.

Continuing with a purely statistical analysis of votes in our case study, we found that a 3-class Benter model provided acceptable model fit based on the same criteria used to assess the rationalizing model. This is notable given that the model contains just 44 parameters to capture the highly diverse preferences of over 1 million voters who were permitted to express any of $64{,}471$ potential rankings over 11 candidates. The model-identified voter blocs are highly interpretable and align with political observers' expectations. Furthermore, estimated parameters and associated uncertainty permit various inferences on the relative prevalence of prominent voter ideologies and the relative strengths of candidates. But despite the posited model successfully providing a generative mechanism for observed votes, it is not immediately suitable for addressing the collective social choice problem which motivated the election as the inferred population heterogeneity cannot be straightforwardly reconciled.

This work highlights the need for careful statistical modeling of voting data. In realistic circumstances where no ``ground truth" exists and voters differ systematically in political preferences, voting tabulations are likely to miss important features of the distribution of individual opinions. Political scientists, journalists, and other individuals interested in understanding or communicating voter preferences should consider using not only mixture models like those used here, but also more complex statistical models. For example, mixed membership models may be used to model voters on an ideological spectrum \citep{airoldi2015handbook}, whereas rank-clustering models may be used to infer if groups of candidates are equally-preferred by subsets of voters \citep{pearce2024bayesian,piancastelli2025clustered}. Our work demonstrates a framework for statistical modeling of election data that could be applied
to aid nuanced inference on voter preferences.

\FloatBarrier

\bibliographystyle{abbrvnat}
\bibliography{main.bib}

\newpage
\appendix

\section*{Appendix A: Proofs of Theorems \ref{thm:irv} and \ref{thm:stv-k}}

\subsection*{Proof of Theorem \ref{thm:irv}}

Consider the voter profiles $V_1$ and $V_2$ shown in Figure~\ref{fig:STV_counter}.
When IRV is applied to $V_1$, candidates $a_4,\dots,a_J$ have no first-choice votes and are eliminated; $a_3$ is eliminated next. The three first-choice votes for $a_3$ are then transferred to $a_1$, resulting in $a_1$ being the winner with $7$ of the $13$ votes. For profile $V_2$, candidates $a_4\dots,a_J$ have no first-choice votes and are eliminated; $a_2$ is eliminated next. Again, this leaves $a_1$ as the winner with $7$ out of $13$ votes. In contrast when the two profiles are combined to obtain $V_1\cup V_2$, $a_1$ is eliminated before $a_2$ or $a_3$, and hence not the winner. Thus $f(V_1)=f(V_2)=a_1$ but $f(V_1\cup V_2)\neq a_1$, and so by Lemma~\ref{lem:reinforcment}, $f$ is not MLE-rationalizable. 
\begin{figure}[!ht]
    \centering
    $V_1=$\begin{tabular}{c|c}
         \# votes & $v$ \\
         \hline
         3 & $a_3\succ a_1\succ  a_2\succ a_4\succ\dots\succ a_J$\\
         4 & $a_1\succ  a_2\succ a_3\succ a_4\succ\dots\succ a_J$\\
         6 & $a_2\succ a_1\succ a_3\succ a_4\succ\dots\succ a_J$\\
    \end{tabular}\qquad
    $V_2=$\begin{tabular}{c|c}
         \# votes & $v$ \\
         \hline
         3 & $ a_2\succ  a_1\succ a_3\succ a_4\succ\dots\succ a_J$\\
         4 & $ a_1\succ a_3\succ  a_2\succ a_4\succ\dots\succ a_J$\\
         6 & $a_3\succ  a_1\succ  a_2\succ a_4\succ\dots\succ a_J$\\
    \end{tabular}\qquad
    $V_1\cup V_2$=\begin{tabular}{c|c}
         \# votes & $v$ \\
         \hline
         9 & $ a_2\succ  a_1\succ a_3\succ a_4\succ\dots\succ a_J$\\
         4 & $ a_1\succ a_3\succ  a_2\succ a_4\succ\dots\succ a_J$\\
         4 & $ a_1\succ  a_2\succ a_3\succ a_4\succ\dots\succ a_J$\\
         9 & $a_3\succ  a_1\succ  a_2\succ a_4\succ\dots\succ a_J$\\
    \end{tabular}
    \caption{Voter profiles to demonstrate that IRV is not MLE-rationalizable.}
    \label{fig:STV_counter}
\end{figure}

\subsection*{Proof of Theorem \ref{thm:stv-k}}

Consider the voter profiles $V_1$ and $V_2$ shown in Figure \ref{fig:STVK_example}. Let $e_1,\dots,e_{k}$ be the $k$ candidates who will be elected in each of $V_1$ and $V_2$. Let $a_1, a_2, \dots, a_{J-K}$ be the remaining candidates.
\begin{figure}[!b]
    \centering
     \begin{subfigure}[t]{0.3\textwidth}
    \centering
    \begin{tabular}{c|c}
        \# votes & $v$ \\
         \hline
         7 & $e_1 \succ \dots $\\
         7 & $e_2 \succ \dots $\\
         \vdots & \vdots \\
         7 & $e_{k-1} \succ \dots $\\
         4 & $e_{k}\succ a_2\succ \dots$\\
         3 & $a_1\succ e_k\succ \dots$\\
         6 & $a_2\succ e_k\succ \dots$\\
    \end{tabular}
    \caption{Profile $V_1$}
    \label{fig:STVK_a}
    \hfill
    \end{subfigure}
    \begin{subfigure}[t]{0.3\textwidth}
    \centering
    \begin{tabular}{c|c}
         \# votes & $v$ \\
         \hline
         7 & $e_1 \succ \dots $\\
         7 & $e_2 \succ \dots $\\
         \vdots & \vdots \\
         7 & $e_{k-1} \succ \dots $\\
         4 & $e_{k}\succ a_1\succ \dots$\\
         3 & $a_2\succ e_k\succ \dots$\\
         6 & $a_1\succ e_k\succ \dots$\\
    \end{tabular}
    \caption{Profile $V_2$}
    \label{fig:STVK_b}
    \end{subfigure}
    \begin{subfigure}[t]{0.3\textwidth}
    \centering
    \begin{tabular}{c|c}
         \# votes & $v$ \\
         \hline
         14 & $e_1 \succ \dots $\\
         14 & $e_2 \succ \dots $\\
         \vdots & \vdots \\
         14 & $e_{k-1} \succ \dots $\\
         4 & $e_{k}\succ a_1\succ \dots$\\
         4 & $e_{k}\succ a_2\succ \dots$\\
         9 & $a_2\succ e_k\succ \dots$\\
         9 & $a_1\succ e_k\succ \dots$\\
    \end{tabular}
    \caption{Combined Profile $V_1 \cup V_2$}
    \label{fig:STVK_c}
    \end{subfigure}
    \caption{Voter profiles to demonstrate that STV-K is not MLE-rationalizable. Ellipses are used to denote a preference ordering among all remaining candidates (specific candidate orderings donot impact results).}
    \label{fig:STVK_example}
\end{figure}
    
In each profile $V_1$ and $V_2$ there are $7K+6$ votes and hence, 7 votes is the threshold for electing a candidate in an election with $K$ winners. In both $V_1$ and $V_2$, candidates $e_1,\dots, e_{k-1}$ are elected on the first round; no votes are transferred. In profile $V_1$, $a_1$ has the fewest first-place votes at the end of round 1 and is removed, with votes transferred to the second-most preferred candidate, $e_k$. Then, $e_k$ is elected with 7 votes. Similarly, in profile $V_2$, $a_2$ has the fewest first-place votes at the end of round 1 and is removed, with votes transferred to the second-most preferred candidate, $e_k$. Then, $e_k$ is elected with 7 votes. In short, profiles $V_1$ and $V_2$ elect candidates $e_1,\dots, e_k$.

In the combined profile, $V_1 \cup V_2$, there are $14K+12$ votes, and hence $14$ votes is the threshold for electing a candidate in an election with $K$ winners. Candidates $e_1,\dots,e_{k-1}$ are elected on the first round; no votes are transferred. Then, $e_k$ has the fewest first place votes and is removed. Thus, the outcome of our social choice rule is different for the combined profile $V_1\cup V_2$, proving that STV-K is not MLE-rationalizable by Lemma \ref{lem:reinforcment}.

\section*{Appendix B: Further Details on the STV-R Rationalizing Model}

We first prove a sufficient separation condition for parameters $k=(k_1,\dots,k_J)$ in the STV-R rationalizing model to ensure that the STV-R ranking, $s$, equals the MLE ranking, $\hat s$. Then, we consider how the separation condition may be relaxed in practice while still ensuring that $\hat s=s$.

\begin{lemma}
    Consider the STV-R rationalizing model shown in equation \ref{model:generalized_stvr}. Denote by $s$ the STV-R ranking. Then if $k_j<(\prod_{\ell=1}^{j-1} k_\ell)^N$ for all $j=2,\dots,J$, then the model MLE $\hat s$ is $s$.
\end{lemma}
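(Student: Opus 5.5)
Fix $k$ satisfying the separation condition. I would prove the lemma by comparing the likelihood of the STV-R ranking $s$ with that of an arbitrary competitor $s'\neq s$. Since the normalizing constants $C_{r_i}(k)$ do not depend on the ranking, it suffices to compare the numerators $L(s')=\prod_{j=1}^J k_j^{n_j(s')}$, where $n_j(s')=\sum_{i=1}^N\delta_j(v_i,s')$.

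The first observation is that for any ranking $s'$ and any vote $v_i$, exactly one $\delta_j(v_i,s')$ equals $1$. It is the largest $j$ such that $s'_j$ appears in $v_i$ once $s'_{j+1},\dots,s'_J$ have been removed. Hence $\sum_j n_j(s')=N$ and $0\le n_j(s')\le N$. To handle a general $s'$, I would reason backwards from position $J$, mirroring the elimination argument of \cite{conitzer2012common}. For each $m$, STV-R chooses $s_m$, among candidates not yet placed in positions $m+1,\dots,J$, to minimize the number of votes whose top remaining candidate is the one being eliminated. In other words, given that positions $m+1,\dots,J$ are fixed, STV-R minimizes $n_m$. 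So if $s'$ and $s$ agree in positions $m+1,\dots,J$, they also agree in $n_{m+1},\dots,n_J$, because $\delta_j$ depends only on $s_j,\dots,s_J$. In addition, $n_m(s)\le n_m(s')$. Ties in elimination would have to be broken consistently, or the statement read as saying that $s$ is \emph{an} MLE; I would add a remark to that effect.

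Next, let $m$ be the largest index with $s'_m\neq s_m$, and suppose first that $n_m(s)<n_m(s')$. Then $n_j(s)=n_j(s')$ for $j>m$, and
\[\frac{L(s')}{L(s)} = k_m^{\,n_m(s')-n_m(s)}\prod_{j<m} k_j^{\,n_j(s')-n_j(s)} \le k_m \prod_{j<m} k_j^{-N} < 1 .\]
The first inequality uses $k_m<1$, $n_m(s')-n_m(s)\ge1$, and $n_j(s')-n_j(s)\ge -N$ together with $k_j<1$. The final strict inequality is exactly the hypothesis $k_m<(\prod_{\ell<m}k_\ell)^N$.

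The main obstacle is the tie case $n_m(s)=n_m(s')$ with $s'_m\neq s_m$, that is, several candidates tie for fewest first-place votes at an elimination stage. The rough idea is that one would then descend to the next differing position. However, the sets of remaining candidates differ, so the backward induction no longer compares like with like. Under a generic no-ties assumption, which is implicit in calling $s$ "the" STV-R ranking, this case cannot occur, and the displayed inequality shows that $s$ is the unique maximizer. Combined with the fact that $C_{r_i}(k)$ is common to all rankings, this gives $\hat s=s$. If ties are allowed, I would instead show that any tied alternative yields a ranking that STV-R could also output under some tie-break, so $\hat s$ coincides with $s$ up to the tie-breaking convention.
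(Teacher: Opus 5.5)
Your argument is correct and is essentially the paper's proof. The paper takes the largest index $j^*$ at which $N_j(s)$ and $N_j(s')$ differ and asserts ``by construction'' that $N_{j^*}(s')\ge N_{j^*}(s)+1$; your stagewise comparison at the last differing position justifies that claim, and, as you note, it needs no ties in the elimination. The paper then bounds the lower-index terms using $0\le N_j\le N$ and the separation condition, exactly as in your displayed ratio.

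One side remark should be dropped. It is not true that exactly one $\delta_j(v_i,s')$ equals $1$: for a complete ranking $\delta_1\equiv 1$, and also $\delta_j=1$ whenever $v_i(1)=s'_j$, so $\sum_j n_j(s')=N$ is false. You only use $0\le n_j(s')\le N$, which holds simply because each $n_j$ counts at most $N$ votes.
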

\begin{proof}
    Let $k$ be fixed such that $k_j<(\prod_{\ell=1}^{j-1} k_\ell)^N$ for all $j=2,\dots,J$. For contradiction, assume that $\hat s=s'$ such that $s'\neq s$. 

    For convenience, let $N_j(s)= \sum_{i=1}^N \delta_j(v_i,s)$ for all $j=1,\dots,J$. Note that by construction of $s$, $(N_J(s),N_{J-1}(s),\dots,N_1(s))$ is lexicographically smaller than $(N_J(s'),N_{J-1}(s'),\dots,N_1(s'))$. Thus, let $j^*$ denote the largest index in which $N_j(s)\neq N_j(s')$. Therefore, $N_{j^*}(s')\geq N_{j^*}(s)+1$. Then,
    \begin{align}
        \log p(V|s',k) &= \sum_{j=1}^J \log k_j\underbrace{\sum_{i=1}^N \delta_j(v_i,s')}_{=N_j(s')} - \underbrace{\sum_{i=1}^N \log C_{r_i}(k)}_{\equiv C}\\
        &= \sum_{j=1}^{j^*-1} N_j(s')\log k_j+ N_{j^*}(s')\log k_{j^*}+\sum_{j=j^*+1}^J N_j(s)\log k_j - C\\
        &< 0+ \log k_{j^*}(N_{j^*}(s)+1)+\sum_{j=j^*+1}^J \log k_jN_j(s) - C \label{step1}\\
        &= \log k_{j^*}+\sum_{j=j^*}^J \log k_jN_j(s) - C \\
        &\leq \sum_{j=1}^{j^*-1}N\log k_j+\sum_{j=j^*}^J \log k_jN_j(s) - C \label{step2}\\
        &\leq \sum_{j=1}^{j^*-1}N_j(s)\log k_j+\sum_{j=j^*}^J \log k_jN_j(s) - C \label{step3}\\
        &=\log p(V|s,k).
    \end{align}
Line \ref{step1} holds since $k_j\in(0,1)$ implies $\log k_j<0$. 
Line \ref{step2} holds since $k_j<(\prod_{\ell=1}^{j-1} k_\ell)^N$ implies $\log k_j < N\sum_{\ell=1}^{j-1}\log k_\ell$.
Line \ref{step3} holds since $N_j(s)\leq N$ for all $j$. We have thus shown $\log p(V|s',k)<\log p(V|s,k)$, which contradicts the assumption that $s'$ is the MLE.
\end{proof}

The above separation condition is sufficient, but not tight. In practice, we find a less strict separation constraint that permits a better model fit while still ensuring that $\hat s$ is the STV-R ranking $s$. To do so, we find some small $K$ such that $k_j<(\prod_{\ell=1}^{j-1} k_\ell)^K$ for all $j=2,\dots,J$ while maintaining $\hat s=s$. We find such a $K$ as follows:
\begin{enumerate}
    \item Set $K=\epsilon$, $\epsilon>0$.
    \item Estimate model parameters $k$ via maximum likelihood such that $s$ is the STV-R ranking and subject to constraint $k_j<(\prod_{\ell=1}^{j-1} k_\ell)^K$ for all $j=2,\dots,J$. Call the estimated $k$ parameters $\hat{k}_K$.
    \item Consider $\mathcal{S}=\{s' | 1\leq d_\tau(s',s)\leq 3\}$, where $d_\tau$ is the Kendall-$\tau$ distance. Calculate $\log p(V|\hat{k}_K,s')$ for all $s'\in\mathcal{S}$.
    \item If $\log p(V|\hat{s}_K,s)>\log p(V|\hat{s}_K,s')$ for all $s'\in\mathcal{S}$, stop the procedure and set $(\hat s,\hat k) = (s,\hat k_K)$. Otherwise, raise $K$ slightly and repeat steps 2--4.
\end{enumerate}
The above algorithm aims to find the less restrictive constraint on parameters $k$ such that the STV-R ranking stays the model MLE for parameter $s$. We note that the neighborhood of rankings $\mathcal{S}$ was kept relatively small for computational efficiency, although a larger neighborhood may be used if desired. In our Case Study, the separation constant was found to be $K=0.39$ by initializing $K$ at $0.01$ and incrementing by $0.01$.

\section*{Appendix C: Further Details on Goodness-of-Fit Criteria}

We proposed three goodness-of-fit criteria in section \ref{sec:STVR}. To assess a model based on these criteria, one must calculate the observed counts for each criteria under a given set of votes and the expected counts for each criteria under a fitted statistical model. The former is straightforward; one need only count the number of votes satisfying each criteria, e.g., the number of votes in which a given candidate is ranked in first place. 

The latter requires more careful explanation. Note that in our modeling exercises, we assume that the number of votes of each length are fixed and known. Thus, expected counts under each goodness-of-fit statistics are first calculated separately for votes of each length, and then an overall expected count for each criterion is computed as a weighted sum of the counts for votes of each length. Weights are determined by the relative size of each vote length. That is, the expected count, $E$, for some statistic under a fitted model is calculated:
$$E = \sum_r \frac{N_r}{N} E_r$$
where $r$ indexes the observed vote lengths, $N_r$ is the number of observed votes of length $r$, $N$ is the total number of votes, and $E_r$ is the expected counts for the statistic among only votes of length $r$ under the fitted model. 

We calculate $E_r$ by explicitly calculating the mass of each possible ranking of length $r$ under the fitted model, summing the mass of rankings satisfying the goodness-of-fit criterion, and multiplying the summed mass by $N_r$. In cases where the number of possible votes to prohibitively large as to compute $E_r$ exactly, Monte Carlo approximation may be used by sampling rankings under the fitted model.

\end{document}